\documentclass[final]{siamltex}
\usepackage{amsmath,amsfonts,amssymb}
\usepackage[usenames]{color}
\usepackage{bm}
\usepackage{braket}
\usepackage{xspace}
\usepackage{paralist}
\usepackage{graphicx}
\usepackage{epstopdf}
\usepackage{boxedminipage}
\usepackage{setspace}
\usepackage{tabularx}
\usepackage{multirow}
\usepackage{rotating}
\usepackage{subfig}
\usepackage{algorithm}
\usepackage{algorithmic}
\usepackage[colorlinks,urlcolor=blue,citecolor=blue,linkcolor=blue]{hyperref}


\newcommand{\cC}{{\cal C}}

\newcommand{\cP}{{\cal P}}

\newcommand{\RR}{\mathbb{R}}


\newcommand{\cei}[1]{\lceil #1 \rceil}

\newcommand{\eps}{\varepsilon}

\newcommand{\EX}{\hbox{\bf E}}

\newcommand{\Sec}[1]{\hyperref[sec:#1]{\S\ref*{sec:#1}}} 
\newcommand{\App}[1]{\hyperref[sec:#1]{Appendix~\ref*{sec:#1}}} 
\newcommand{\Eqn}[1]{\hyperref[eq:#1]{(\ref*{eq:#1})}} 
\newcommand{\Fig}[1]{\hyperref[fig:#1]{Fig.\,\ref*{fig:#1}}} 
\newcommand{\Tab}[1]{\hyperref[tab:#1]{Tab.\,\ref*{tab:#1}}} 
\newcommand{\Thm}[1]{\hyperref[thm:#1]{Thm.\,\ref*{thm:#1}}} 
\newcommand{\Lem}[1]{\hyperref[lem:#1]{Lem.\,\ref*{lem:#1}}} 
\newcommand{\Prop}[1]{\hyperref[prop:#1]{Prop.~\ref*{prop:#1}}} 
\newcommand{\Cor}[1]{\hyperref[cor:#1]{Cor.~\ref*{cor:#1}}} 
\newcommand{\Def}[1]{\hyperref[def:#1]{Defn.~\ref*{def:#1}}} 
\newcommand{\Ex}[1]{\hyperref[ex:#1]{Ex.~\ref*{ex:#1}}} 
\newcommand{\Clm}[1]{\hyperref[clm:#1]{Claim~\ref*{clm:#1}}} 
\newcommand{\Step}[1]{\hyperref[step:#1]{Step~\ref*{step:#1}}} 

\newcommand{\cema}{\cC}

\newcommand{\pmetric}{P}
\newcommand{\pt}{{p}}
\newcommand{\samp}{k}
\newcommand{\thresh}{\tau}


\newcommand{\be}{\begin{equation}}
\newcommand{\ee}{\end{equation}}
\newcommand{\bea}{\begin{eqnarray}}
\newcommand{\eea}{\end{eqnarray}}



\begin{document}

\title{Trigger detection for adaptive scientific workflows using percentile sampling \thanks{This work was funded by the Laboratory Directed Research and Development (LDRD) program of  Sandia National Laboratories. Sandia National Laboratories is a multi-program laboratory managed and operated by Sandia Corporation, a wholly owned subsidiary of Lockheed Martin Corporation, for the U.S. Department of Energy's National Nuclear Security Administration under contract DE-AC04-94AL85000.}}

\author{Janine C. Bennett\footnotemark[1], Ankit Bhagatwala\footnotemark[1], Jacqueline H. Chen\footnotemark[1], C. Seshadhri\footnotemark[2], Ali Pinar\footnotemark[1], Maher Salloum\footnotemark[1] }

\maketitle

\renewcommand{\thefootnote}{\fnsymbol{footnote}}
\footnotetext[1]{Sandia National Laboratories, Livermore, CA.
  Email: \{jcbenne, abhagat, jhcehn, apinar, mnsallo\}@sandia.gov}
\footnotetext[1]{ Dept. Computer Science, Univeristy of California at Santa Cruz, 
  Email: scomandu@ucsc.edu}
\renewcommand{\thefootnote}{\arabic{footnote}}

\begin{abstract}
Increasing complexity of both scientific simulations and high
performance computing system architectures are driving the need for 
\emph{adaptive workflows}, in which the composition and execution of  
computational and data manipulation steps dynamically depend on 
the evolutionary state of the simulation itself. 
Consider for example, the frequency of data storage.  Critical phases of the
simulation should be captured with high frequency and with high fidelity 
for post-analysis, however we cannot afford to retain the same 
frequency for the full simulation due to the high cost of data movement.  
We can instead look for triggers, indicators  that the simulation will be 
entering a critical phase, and adapt the workflow accordingly. 

In this paper, we present a methodology for detecting triggers and demonstrate
its use in the context of direct numerical simulations of turbulent combustion using S3D. We show 
that chemical explosive mode analysis (CEMA) can be  used to devise a 
noise-tolerant indicator for rapid increase in heat release. However, 
exhaustive computation of CEMA values dominates  the total simulation, 
thus is prohibitively expensive.  To overcome this computational 
bottleneck, we propose a quantile sampling approach.  Our sampling 
based algorithm  comes with provable error/confidence bounds, as a 
function of the number of samples. Most importantly,  the number of 
samples is independent of the problem size, thus our proposed sampling 
algorithm offers perfect scalability.  Our experiments on homogeneous charge 
compression ignition (HCCI)  and reactivity controlled compression ignition (RCCI) simulations 
show that the proposed method can  detect rapid increases in heat release, and its 
computational overhead is negligible. Our results will be used to make dynamic workflow 
decisions regarding data storage and mesh resolution in future combustion simulations.
The proposed sampling-based framework is
generalizable and we detail how it could be applied to a broad class of
scientific simulation workflows.  
\end{abstract}

\noindent
{\bf Keywords:} Sublinear algorithms; quantile sampling; \emph{in situ} data analysis; chemical explosive mode analysis (CEMA); S3D; adaptive workflow; judicious I/O;

\section{Introduction}
Steady improvements in computing resources enable ever more enhanced scientific
simulations, however Input/Output (I/O) constraints are impeding their impact.
Historically, scientific computing workflows have been defined by three 
independent stages (see \Fig{workflow}(a)):  
1) a pre-processing stage comprising 
initialization and set up (for example mesh generation, or
initial small-scale test runs); 2) the scientific computation itself (in which data
is periodically saved to disk at a prescribed frequency); and 3)
post-processing and analysis of data for scientific insights.
With improved computing resources scientists are increasing the temporal resolution of their 
simulations. However, as computational power continues to outpace I/O capabilities, 
the gap between time steps saved to disk keeps increasing. 
This compromise in the fidelity of the data being 
saved to disk makes it impossible to track features with 
timescales smaller than that of I/O frequency. Moreover, this situation is projected to worsen 
as we look ahead to future architectures with improvements in computational power continuing
to significantly outpace I/O capabilities~\cite{doe_arch,dav_exascale}, see
\Tab{exascale}. 

\begin{figure}[tbp] 
   \centering
   \includegraphics[width=5in]{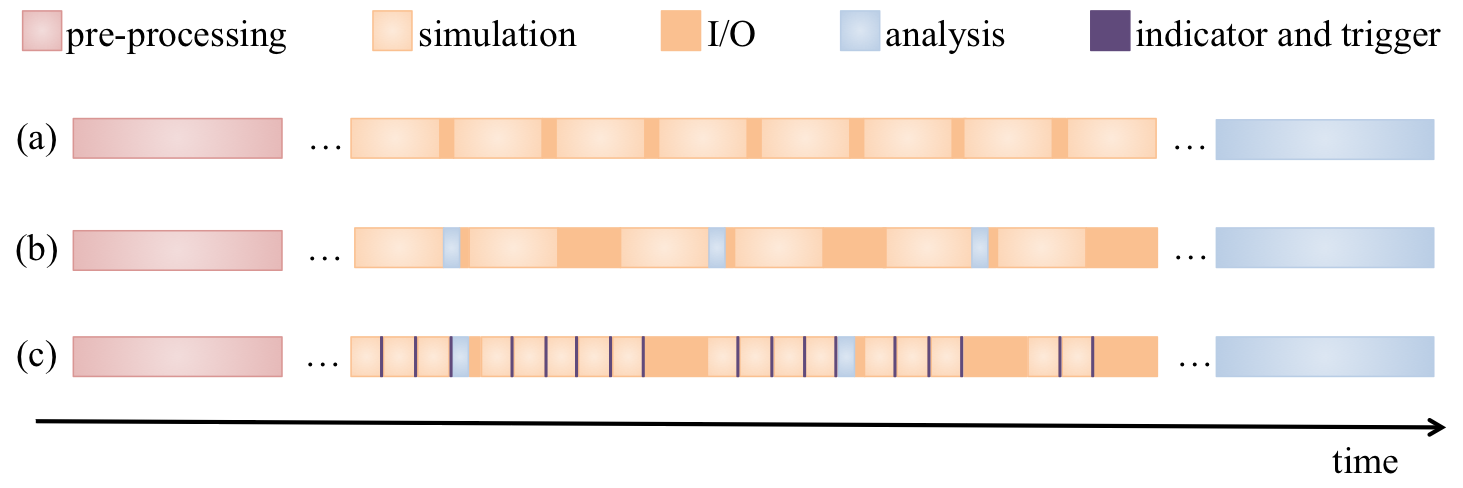}\\
   \caption{(a) An illustration of a traditional workflow made up of 3 stages:
   1) pre-procesing, 2) scientific computation and I/O at a prescribed rate, 3)
 analysis as a post-process. (b) Computational capabilities are outpacing I/O
 on future architectures, causing a change in workflows as some portion of the
 analysis moves \emph{in situ}. Most current day workflows remain static, with the frequency of I/O and
 analysis being prescribed upfront by the scientists. (c) This paper introduces
 the use of indicators and triggers to support adaptive workflows.  The
 indicator and trigger are lightweight functions evaluated at a high  frequency
 to make dynamic data-driven control-flow decisions.}
   \label{fig:workflow}
\end{figure}

Consequently, we are seeing a paradigm shift away from the use of prescribed I/O frequencies and 
post-process-centric data analysis, towards a more flexible concurrent 
paradigm in which raw simulation data is processed \emph{in situ} as it is computed, see
\Fig{workflow} (b).  In spite of the paradigm shift, concurrent processing 
does not provide a complete solution, as it requires all analysis questions be posed 
\emph{a priori}. This is not always possible as 
scientists often analyze their data in an interactive and exploratory fashion.
One potential solution, is to store data judiciously for only the time segments 
that will merit further analysis. However, the problem with this approach is
that the computation required to automatically and adaptively make decisions
regarding the workflow (e.g,. I/O and/or \emph{in situ} data analysis frequencies)
based on simulation state can be prohibitively expensive in their
own right. Therefore, we argue that one of the more pressing fundamental research challenges 
is the need for efficient, adaptive, data-driven control-flow mechanisms for
extreme-scale scientific simulation workflows,  which is the primary goal of this work. 

\begin{table}[t]
\caption{\label{tab:exascale} Expected exascale architecture
parameters for the design of two ``swim lanes'' of very different 
design choices~\cite{doe_arch, dav_exascale}. Note the drastic   difference
between expected improvements in I/O  and compute capacities in both swim
lanes.}  
\centering
 \begin{tabular}{|c|c|c|c|c|c|} 
 \hline

 System Parameter & 2011 &\multicolumn{2} {|c|} {2018} & Factor Change \\ 
 \hline\hline
 \hline
 System Peak & 2 Pf/s & \multicolumn{2}{|c|}{1 Ef/s} & 500 \\
 \hline
 Power & 6 MW & \multicolumn{2}{|c|}{$\le20$ MW} & 3 \\
 \hline
 System Memory & 0.3 PB & \multicolumn{2}{|c|}{32-64 PB} & 100-200 \\
 \hline
 Total Concurrency & 225K & 1B$\times$ 10 & 1B $\times$ 100 & 40000-400000 \\
 \hline
 Node Performance & 125 GF & 1TF & 10 TF& 8-80 \\
 \hline
 Node Concurrency & 12 & 1000 & 10000 & 83-830 \\
 \hline
 Network Bandwidth & 1.5 GB/s & 100 GB/s & 1000 GB/s& 66-660 \\
 \hline
 System Size (nodes) & 18700 & 1000000 & 100000 & 50-500 \\
 \hline
 I/O Capacity & 15 PB & \multicolumn{2}{|c|}{30-100 PB} & 20-67 \\ \hline
 I/O Bandwidth & 0.2 TB/s & \multicolumn{2}{|c|}{20-60 TB/s} & 10-30 \\\hline
\end{tabular}
\end{table} 
 
Here we introduce a methodology that is broadly applicable, yet can
be specialized to provide confidence guarantees for application-specific simulations and
underlying phenomena. 
Our methodology comprises three steps:  
\begin{compactenum} 
\item  Identify a noise-resistant \emph{indicator} that can be used to track changes in
  simulation state. 
\item  Devise a \emph{trigger} which specifies that a property of the indicator has been met.  
 \item  Design efficient and scalable algorithms to compute indicators and triggers.
 \end{compactenum} 
To make decisions in a data-driven
fashion, a user-defined \emph{indicator} function must be computed and measured \emph{in situ} 
at a relatively regular and high-frequency.  Along with the indicator, the application scientist
defines an associated \emph{trigger}, a function that returns a boolean value
indicating that the indicator has met some property, for example a threshold
value.  Together, indicators and triggers define data-driven control-flow
mechanisms within a scientific workflow, see \Fig{workflow}(c).  While this 
methodology is very intuitive and conceptually quite simple,
the challenges lie in defining indicators and triggers that capture the appropriate
scientific information while remaining cost efficient in terms of runtime, memory footprint, and I/O
requirements so that they can be deployed at the high frequency that is
required.  

In this article we demonstrate how recent advances in sublinear
algorithms~\cite{FischerSurvey,RonSurvey, RubSurvey} can be used to create efficient
indicators and triggers to enable data-driven control-flow mechanisms,
even in those cases where standard implementations of the
indicator and trigger would be significantly too expensive.  Sublinear algorithms are designed to estimate 
properties of a function over a massive discrete domain while 1) accessing 
only a tiny fraction of the domain, and 2) quantifying the error or uncertainty 
due to using only a sample of the data.  Sublinear indicators and triggers
operate on a sample whose size is dependent on the accuracy
of the desired result, rather than the input size. Consequently, sublinear
indicators and triggers can be deployed with \emph{high confidence} to 
make workflow decisions in extreme-scale simulations.  Sublinear
algorithms have their limitations;  in particular, they are not amenable for those control-flow
decisions that are based on anomaly detection.  However, they are well suited
to control-flow decisions regarding general trends in data, e.g., based on 
quantile plots of trends of computationally expensive quantities of interest. 

While our proposed approach is general, the first two steps can be made application
dependent, leveraging some knowledge of the underlying physics.  
In this paper we demonstrate our approach applied to dynamic workflow decisions 
in the context of direct numerical simulations of turbulent combustion using
S3D~\cite{chen09}.  In this use case, workflow decisions regarding both grid resolution 
and I/O frequency can be made based on the detection of a 
rapid increase in heat release.  This means our indicator should be a precursor to
heat release, and not the heat release itself.  What precedes the heat release?
 (see \Fig{heat_release}.) 
Recent studies have shown that chemical explosive mode analysis (CEMA) is a good lead indicator of 
heat release events~\cite{lu,shan}, and in this paper we show that CEMA can be used to 
devise a noise-tolerant indicator function and trigger.    CEMA is a point-wise
metric computed at each grid point.  Our analysis of the distribution of  CEMA
values  shows that the range of 
CEMA values covered by the top percentiles, compared to the full range of CEMA values, 
shrinks right before heat release and then expands  afterwards. This change in
distribution of quantiles is illustrated in \Fig{CEMA_indicator} and provides the
basis for our indicator and trigger. 

Now that we have an intuition for what can be a good lead indicator for heat release, 
the next step is to devise an indicator function and associated trigger that
quantify  the shrinking/expansion of the portion of top quantiles of CEMA values 
over the full range. The challenge here is the noise tolerance.  While the range of 
the CEMA values is defined  by the minimum and  the maximum, these  values are, by definition, 
outliers of the distribution,  and thus their adoption will lead to
noise-sensitive triggers.  Instead we use the quantiles of the distribution,  
which  remain stable among instances of the same distribution. For instance, we can 
replace the minimum with the 1-percentile and the maximum with the
98-percentile, still capturing the concept of range, but yielding a much more 
noise-tolerant trigger, as supported by experimental results (see \Sec{sample}). 

The final step is to design efficient and scalable algorithms for the
indicators and triggers we have devised. 
Our experiments show that our CEMA-based indicators and triggers work well in practice,  
however, exhaustive computation of CEMA values can dominate  the total
simulation computation time and are thus prohibitively expensive.  To overcome 
this computational bottleneck, we propose a sampling approach to estimate the quantiles.  
Our sampling based algorithm comes with provable error/confidence bounds  that are only 
a function of the number of samples.  With only 48K samples, the error will be less than \%1 
with confidence \%99.9, which in large-scale simulation runs leads to only a few 
samples per processor.  Most importantly, the number of samples is independent of 
the problem size, thus our proposed sampling algorithm offers perfect scalability.  
Our experiments on homogeneous charge compression ignition (HCCI)  and reactivity 
controlled compression ignition (RCCI) simulations show that  the proposed method 
can detect heat release, with negligible computational overhead. Moreover our results
will be used to make dynamic workflow decisions regarding data storage and mesh
resolution in future combustion simulations.  
  
The rest of the paper is organized as follows. \Sec{background}
provides the background for our work; we first review  the need  for and 
the state of adaptive workflows. Then we overview the mathematically rich 
field of sublinear algorithms and describe how this field can be 
instrumental in designing indicators and triggers for adaptive workflows.  
We conclude \Sec{background} with motivation of the combustion use case that is used
throughout our paper.
\Sec{indicator_trigger}
discusses the process of identifying a noise-resistant indicator and trigger
for  phase change in a simulation, and includes physics intuitions for how and
why CEMA can be used to construct an indicator for heat release for our
combustion use case. In \Sec{sample} we demonstrate how to compute 
the indicator and trigger efficiently using a sublinear approach and we put all the pieces together in \Sec{app} 
to demonstrate our technique on a full scale simulation. Finally, we
conclude with \Sec{conc}. 

\section{Background}
\label{sec:background} 
We begin this section  by reviewing recent work in enabling complex
scientific computing workflows. We then provide an overview of sublinear algorithms and 
discuss how these mathematical techniques can be deployed to make data driven
decisions \emph{in situ}. 
We conclude this section with a brief overview of our combustion use case.

\subsection{Adaptive data-driven workflows  and concurrent analysis frameworks}
\label{sec:enabling}
As we move to next generation architectures, scientists are moving away from traditional workflows in
which the simulation state is saved at prescribed frequencies for
post-processing analysis.  There are a number of concurrent analysis frameworks
available, wherein raw simulation output is processed as it is computed, decoupling the analysis from I/O. 
Both \emph{in situ} \cite{Yu2010,visit:2011, paraview:ldav11}  and 
\emph{in transit}  \cite{glean:ldav11, JITStaging, Bennett:2012} processing are based on  
performing analyses as the simulation is running, storing only the results, which are typically  several orders 
of magnitude smaller than the raw data. This reduction mitigates the effects of limited disk bandwidth and
capacity. Operations sharing primary resources of the simulation are considered
\emph{in situ}, while \emph{in transit} 
processing involves asynchronous data transfers to secondary resources. 

Concurrent analyses are often performed at frequencies that are prescribed by the scientists \emph{a
priori}.  For those analyses that are not too expensive -- in terms of runtime
(with respect to a simulation time step), memory footprint, and output size -- 
the prescription of frequencies is a viable approach.  However, for those analyses 
that are too expensive, prescribed frequencies will not suffice because the 
scientific phenomenon that is being simulated typically does not behave linearly (e.g., combustion, climate,
astrophysics).  When scientists choose a prescribed I/O or analysis frequency 
that is frequent enough to capture the science of interest, the costs incurred are 
too great, while a prescribed frequency that is cost-effective and less
frequent may miss the underlying scientific effects that simulation is intended to capture.
An alternative approach would be to perform expensive analyses and I/O in an
adaptive fashion, driven by the data itself.  In ~\cite{nouanesengsy2014adr,
modelPaper}, such techniques have been developed based on entropy of
information in the data, and building piecewise-linear fits of quantities of
interest.  These approaches fit within the methodology proposed here and are domain-agnostic.
In this work we present a strategy that can leverage the scientists' physics
intuitions, even when the \emph{in situ} analyses that captures those
intuitions would otherwise be too expensive to compute. 

\subsection{Sublinear algorithms}
A recent development in theoretical computer science and mathematics is the
study of sublinear algorithms,  which aim to understand global features of a data set  while using limited resources. 
Often enough,  we do not need to look at the entire data to determine some of its important features.  The field of sublinear 
algorithms~\cite{FischerSurvey,RonSurvey, RubSurvey} makes precise the settings when this is possible and combines discrete math and algorithmic 
techniques with statistical tools to quantify error and give trade-offs with sample sizes.  This confidence measure is necessary for  adoption 
of such techniques by the scientific computing community, whose scientific results can be used
to make high-impact decisions.

Formally,  given a function, $f:D \rightarrow R$, we assume  for any $x \in D$, we can
query $f(x)$. 
For example, in S3D simulations we have  $D = [n]^3$ as a structured grid, and $R = \RR$ can be  the temperature values. 
If $D = [n]$ and $R = \{0,1\}$, then $f$ represents an $n$-bit binary string. If $R = \{A, T, G, C\}$,
 $f$ could represent a DNA segment. If $D = [n]^2$, the function could represent a matrix (or
a graph). Note $D$  can also be an unstructured grid, modeled as a graph. Similarly, almost any data analysis input can be cast as a 
collection of functions over a discrete or discretized domain.

We are interested in some \emph{specific} property of $f$, which is phrased  as
a yes/no question. For instance, in a jet simulation we can ask  if there  exists a high-temperature region spanning the $x$-axis of the grid. 
How can we determine if $f$ satisfies property $\cP$ \emph{without querying all of $f$}? 
It is impossible to give an exact answer without knowledge of $f$. To formalize what can be inferred
by querying $o(|D|)$ values of $f$, we use a notion of the \emph{distance to
$\cP$}~\footnote{ $f(n) = o(g(n))$ means for all $c > 0$ there exists some $k >
0$ such that $0 \leq f(n) < cg(n)$ for all $n \geq k$. The value of $k$ must not
depend on $n$, but may depend on $c$.}.
Every function $f$ has a distance to $\cP$, denoted $\eps_f$, where $\eps_f = 0$ iff $f$
satisfied $\cP$. To provide an exact answer to questions regarding $\cP$, we determine
whether $\eps_f = 0$ or $\eps_f \neq 0$. However, approximate answers can be given by 
choosing some error parameter $\eps > 0$ and then determining whether we  can distinguish
$\eps_f = 0$ from $\eps_f > \eps$.
The theory of sublinear algorithms shows whether the latter question can be resolved  by an algorithm that samples $o(|D|)$ function values.

For a sublinear algorithm, there are usually three parameters of interest: the number of samples $t$, the error $\eps$, and the
confidence $\delta$. As described earlier, the error is expressed as the distance to $\cP$.
Analysis shows that for a given $t$, we can estimate
the  answer within error $\eps$ with a confidence of $> 1 -\delta$. Conversely, given $\eps, \delta$,
we can compute the number of samples required.

Although at a high level, any question that can be framed in terms of determining global properties of a large domain is 
subject to a sublinear analysis, surprisingly, the origins of this field
have nothing to do with ``big data'' or computational challenges in data analysis. 
The birth of sublinear algorithms is in computational complexity theory~\cite{RS96}.
Hence, practical performances of  these methods on real applications have not been fully investigated. 
Recent work by some of the authors showcase the potential of sampling algorithms in graph
analysis~\cite{SePiKo13, SePiKo14, KoPiPlSe13, JhSePi13, JhSePi15,BaKo15}, and the generation of application-independent
generation of colormaps~\cite{TBSP13}.

\subsection{Potential of sublinear algorithms in enabling adaptive workflows}
The construction of a scalable and performant indicator function comprises a
number of technical challenges. First, algorithms will be sharing compute 
resources with the simulation, which poses constraints on the algorithmic choices. 
In particular, memory is expected to be a bottleneck in exascale computers 
and beyond (see \Tab{exascale}), and thus we may have to work with data 
structures of the application, and not be able to build auxiliary data 
structures that will improve the performance of our algorithms. Secondly, the 
data layout will be dictated by the simulation, locally at the node level 
and globally at the system level. This layout will not necessarily be favorable 
for our algorithms, yet pre-processing to move the data will be infeasible at 
large scales.  Thirdly,  computation of the indicator needs to be fast, so that it does not  
slow down  the simulation computation.  For instance,   our analysis for judicious I/O   
cannot take more time than the I/O itself.  

We expect sampling-based algorithms, especially sublinear algorithms, to play an 
important role in the design of indicator functions at the  exascale era and beyond.  First, 
the small number of samples grant runtime efficiency, which enable working 
concurrently with the simulation, with  negligible effect on runtime.  
The error/confidence bounds quantify the  compromise in accuracy compared to full analysis. Moreover, for most 
problems, the number of samples required only depend on  error/confidence bounds, 
which lead to perfect scalability of these algorithms for extreme problem sizes. The memory 
requirements of  sublinear algorithms are also  small, and typically  only in the 
order of the samples.  In some cases,  additional data structures may be necessary 
to enable random sampling,  but even  such structures are not memory-intensive. 

We claim that  sublinear algorithms can play a critical role  for \emph{in
situ} analysis. With this 
paper, we  will showcase one application of sublinear algorithms, and we hope that   our 
success will draw attention to this field with high potential. 

\subsection{Combustion use case}
Throughout this paper we demonstrate our approach applied to a combustion
use case, using S3D~\cite{chen09}, a direct numerical simulation (DNS) of 
combustion in turbulence.  
The combustion simulations in our use case pertain to a class of internal combustion
(IC) engine concept, called premixed-charge compression ignition (PCCI). The
central idea is that the air-fuel mixture is allowed to ignite on its own, as opposed to being forced to ignite through a spark, as
is done in conventional spark-ignited (SI) engines. This results in
substantial improvements in fuel efficiency. However, one of the roadblocks to
this technology is that the ignition is difficult to control. In
particular, it is difficult to predict the precise moment of ignition,
which is important for such an engine to be practical. It is undesirable to
have simultaneous ignition of the entire mixture, or even a large fraction
of the mixture, which would result in knocking, damaging the engine. \\ 

\begin{figure}[th]
\centering
\includegraphics[width=0.6\textwidth]{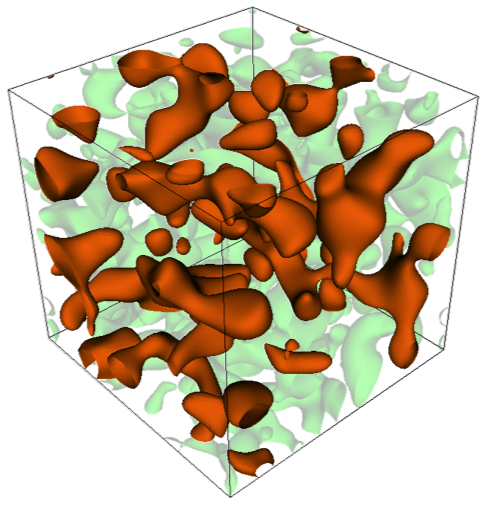}
\caption{\label{fig:kernels} In a Homogenous Compression Charge Ignition
simulation many small heat kernels slowly develop prior to auto-ignition.
In this image, regions of high heat kernels are shown in orange and regions
of high vorticity are shown in green.}
\end{figure}

Within the broad framework of PCCI, several ideas have been
proposed to alleviate the difficulty noted above. All of them seek to
control the ignition process by staggering it in time, i.e. different
parcels of the air-fuel mixture ignite at different times, so that the
overall heat release is delocalized in time. This is done typically by
stratifying the mixture, i.e. mixture properties are varied  spatially in
such a way that the desired heat release profile is obtained. Here, we are
interested in two specific techniques called homogeneous-charge compression
ignition (HCCI) \cite{bhagatwala1} and reactivity-controlled compression
ignition (RCCI) \cite{kokjohn,bhagatwala2}. In both cases, heat release
starts out in the form of small kernels at arbitrary locations in the
simulation domain, see \Fig{kernels}. Eventually, multiple kernels 
ignite as the overall heat release reaches a global maximum and subsequently 
declines. Since these simulations are computationally and storage-intensive, 
we want to run the simulation at a coarser grid resolution and save data less 
frequently during the early build-up phase.  When the heat release events 
occur,  we want to run the simulation at the finest grid granularity possible, 
and store the data as frequently as possible.  Therefore, it is imperative to 
be able to predict the start of the heat release event using an indicator and 
trigger that serve to inform the application to adjust its grid resolution and 
I/O frequency accordingly, see \Fig{heat_release}. \\

\begin{figure}[H]
\centering
\includegraphics[width=1.0\textwidth]{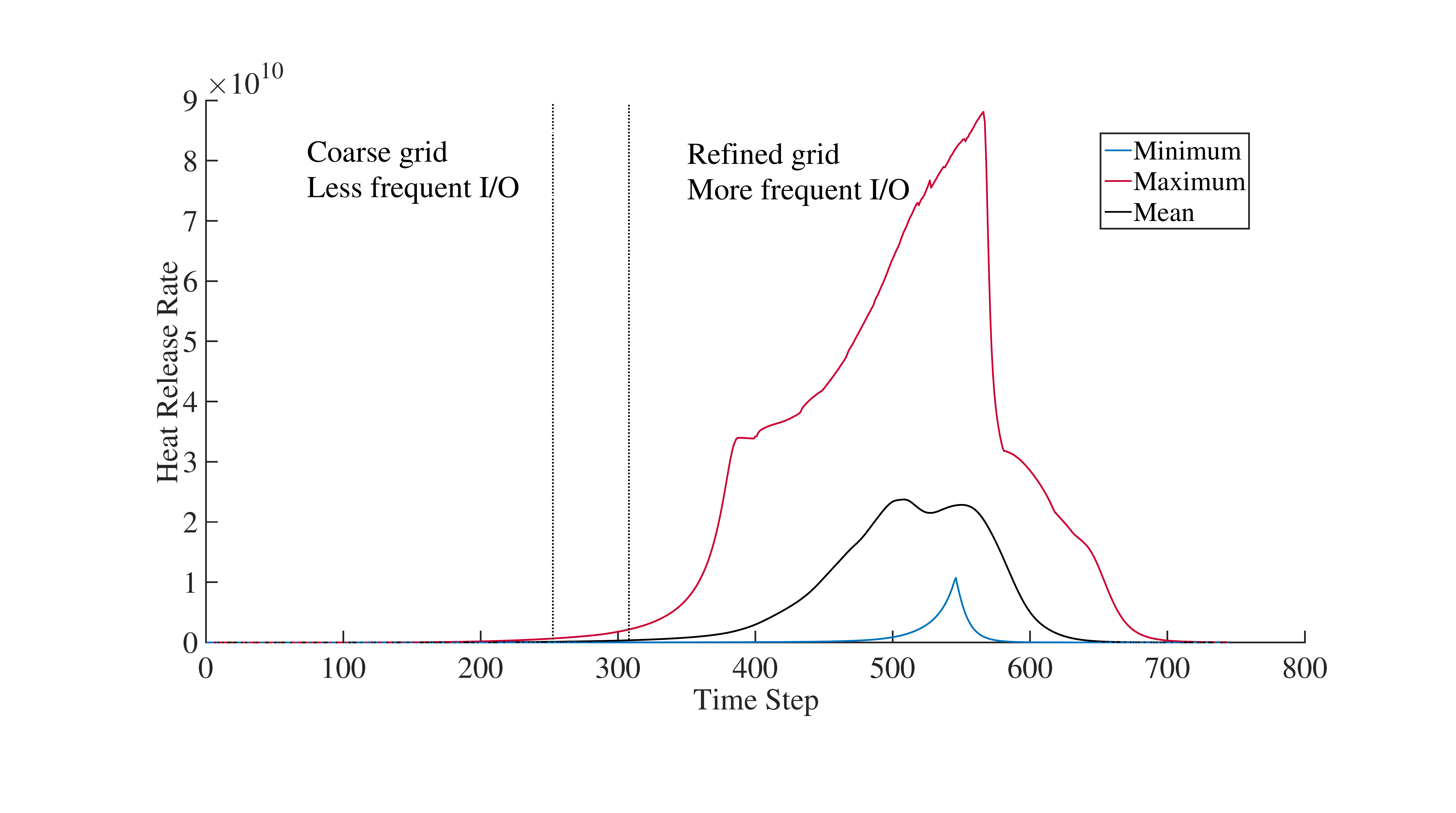}
\caption{\label{fig:heat_release} 
    The minimum (blue), maximum (red) and mean (black) heat release values
  for each time step in the simulation. Early in the simulation,
  we want to run at a coarser grid resolution and save data less frequently.
  When heat release events occur, we want to run the simulation at the finest
  granularity, and save the data as frequently as possible.  The vertical dotted lines
  in this figure define a range of time steps within which we would like to make
  this workflow transition (as identified by a domain expert).
}
\end{figure}

\section{Designing a Noise-Resistant Indicator and Trigger} 
\label{sec:indicator_trigger}
While general-purpose indicators could be computed (e.g. entropy of a quantity
of interest), we argue that application domain-specific indicators in many
cases will best capture the phenomena of interest.  In
this section we describe the design of an indicator and trigger for heat
release for our combustion use case.  
To provide context, we begin by discussing the intuitions that informed our
design.

\subsection{Chemical Explosive Mode Analysis}
\label{sec:intution}
One of the most reliable techniques to predict incipient heat release is
the chemical explosive mode analysis (CEMA).
CEMA is a pointwise computational technique described in detail by Lu \emph{et al.} \cite{lu} and Shan \emph{et al.}
\cite{shan}. A brief description is provided here for reference.
The conservation equations for reacting species can be written as 

\begin{equation*} 
\frac{D\mathbf{y}}{Dt} = \mathbf{g(y)} \equiv \mathbf{\omega(y)} + \mathbf{s(y)}
\end{equation*}

\noindent The vector $\mathbf{y}$ in CEMA represents temperature and reacting
species mass fractions, $\mathbf{\omega}$ is the reaction source term and
$\mathbf{s}$ is the mixing term. The Jacobian of the right hand side can be written as 

\[ 
\mathbf{J_g}  =  \frac{\partial \mathbf{g(y)}}{\partial \mathbf{y}} = \mathbf{J_\omega} + \mathbf{J_s}, \;\;{\mathrm where} \;\;
\mathbf{J_\omega} =  \frac{\partial \mathbf{\omega(y)}}{\partial \mathbf{y}}  \;\; {\mathrm and }\;\;
\mathbf{J_s}  =  \frac{\partial \mathbf{s(y)}}{\partial \mathbf{y}}. 
\]

\noindent The chemical Jacobian, $\mathbf{J_\omega}$ can be used to infer
chemical properties of the mixture. This is done using an
eigen-decomposition of the Jacobian. If the eigenvalues of the Jacobian
corresponding to the non-conservative modes are arranged in descending
order of the real part, $\lambda_e$ is defined as the first eigenvalue and
$\lambda_i$ are the remaining eigenvalues. The eigenmode associated with
$\lambda_e$ is defined as a chemical explosive mode (CEM) if 

\begin{equation}
\textrm{Re}(\lambda_e) > 0, \;\;{\mathrm for}\;\; \lambda_e = \mathbf{b_e J_\omega a_e},
\end{equation}   

\noindent where $\mathbf{b_e}$ and $\mathbf{a_e}$ are the left and right
eigenvectors respectively for $\lambda_e$. The presence of a CEM indicates
the propensity of a mixture to ignite. CEMA is a pointwise metric, typically
computed at every grid point in the simulation domain. The criterion
defined above then indicates whether that point will undergo ignition or
whether it has already undergone ignition. If it has undergone ignition, we
have $Re(\lambda_e)<0$.  

\begin{figure}[H]
\centering
\includegraphics[width=0.9\textwidth]{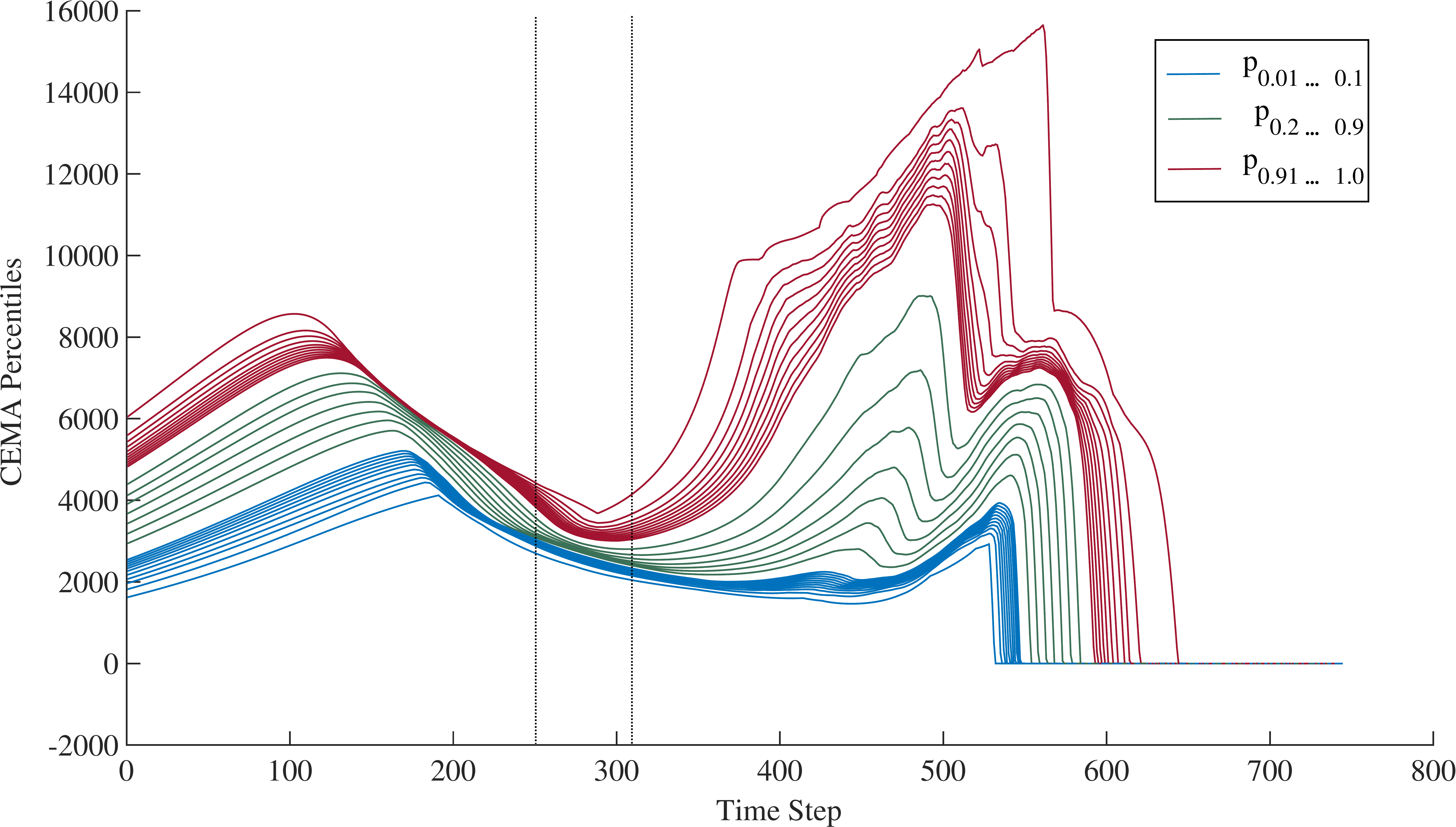}
\caption{\label{fig:CEMA_indicator} 
   In this percentile plot of CEMA values, the lowest
blue curve and the highest red curve correspond to the 1 and
100 percentiles ($p_{0.01}$ and $p_{1}$), respectively. The blue curves
correspond to $p_{0.01 \ldots 0.1}$, the green curves to $p_{0.2 \ldots 0.9}$
and the red curves to $p_{0.91 \ldots 1}$. We notice that as the simulation progresses, the distance between the higher percentiles (red
curves) decreases then suddenly increases. Our aim is to define a 
function that captures when this spread in the high percentiles occurs, as this
serves as a good indicator of ignition (it falls within the user-defined window
of true trigger time steps, indicated by the vertical dotted lines).
}
\end{figure}

Our CEMA-based indicator is based on global trends of CEMA over time.
Consider \Fig{CEMA_indicator} 
which provides a summary of the trends of CEMA values across all time steps in a simulation. 
At timestep $t$, let $\cema(t)$ be the array of CEMA values on the underlying mesh. 
It is convenient to think of $\cema(t) \in \RR^N$, where $N$ is the total number
of grid points. This array is distributed
among $M$ processors such that each process accesses $N/M$ points of the
field.  Let $\widehat{\cema(t)}$
be the sorted version of $\cema(t)$. For $\alpha \in (0,1]$, the $\alpha$-percentile 
is the entry $\widehat{\cema(t)}_{\lceil \alpha N \rceil}$. 
More specifically, it is the value in $\widehat{\cema(t)}$ that is greater than at least $\cei{\alpha N}$
values in $\widehat{\cema(t)}$. We denote this value by $\pt_\alpha(t)$.

We notice that as the simulation progresses, the distance between the higher percentiles (red
curves) decreases then suddenly increases. This is illustrated in the plot
by the spread in the red curves that occurs just before the dashed line (indicating
ignition). What does this mean? Suppose the range of CEMA values (at time $t$)
is $[x,y]$. If the distribution of CEMA values was uniform in $[x,y]$, then the 
$\alpha$-percentile would have value around $x + \alpha(y-x)$. 
Suppose the distribution was highly non-uniform, with a large fraction
of small (close to $x$) values. Then, for large $\alpha$, we expect
the $\alpha$-percentile to be smaller than $x + \alpha(y-x)$. 
Alternately, for (large) $\alpha < \beta$, in the uniform case,
the difference between these percentiles is $(\beta-\alpha)(y-x)$.
If many values are small, we expect this difference to be larger.
Essentially, the gaps between the high percentiles become larger
as more CEMA values move towards the lower end of the range. 
 
Our empirical observation is consistent with the underlying physics. From a physical 
point of view, this trend in the distribution of CEMA values indicates the formation of the first ignition kernels in the 
fuel-air mixture. As some of these kernels become fully burnt, their CEMA values 
become negative. As a parcel of fluid transitions from fully unburnt to partially 
burnt to fully burnt, its temperature increases monotonically and the CEMA value 
associated with it reaches a peak value before crossing zero and attaining a 
negative value indicating a fully burnt state. The CEMA values for several 
other kernels that are in different stages of ignition, i.e. partially burnt, 
lie between those for unburnt (large positive) and burnt (negative) mixtures. 
The large range of CEMA values in partially burnt mixtures explains the range 
of values seen in the percentile plots as ignition is initiated in the mixture.
\subsection{Designing a Noise-Resistant CEMA-Based Indicator} \label{sec:indicator}
We introduce an indicator function, \emph{\pmetric-indicator} that quantifies 
the distribution of the top quantiles of CEMA values over time. The
\pmetric-indicator measures the ratio of the range of the top percentiles  
to the full range of CEMA values. 
Our indicator function  relies on  the range of CEMA values,  which are 
defined by their minimum and maximum.  However, the maximum and the minimum of a 
distribution, by definition,  are outliers, and  thus they can change drastically  
between instances even when the underlying  distribution does not change.  Hence, 
we avoid the maximum and minimum and replace them with  high and low quantiles of 
the distribution. 

Formally, quantiles are defined as  values taken at regular intervals from the 
inverse of the cumulative distribution function of a random variable. For a given 
data set, quantiles are used to divide the data into equal sized sets  after sorting, 
and the quantiles are the values on the boundary between consecutive subsets.  A special 
case is dividing 100 equal groups, when we can refer to quantiles as
percentiles.  This paper focuses on percentiles with numbers in the $[0,1]$ 
range (although all techniques presented here can be generalized for any
quantiles). For example, the $0.5$ percentile will refer to the median of the data set. 
\begin{figure}[H]
\centering
\includegraphics[width=0.9\textwidth]{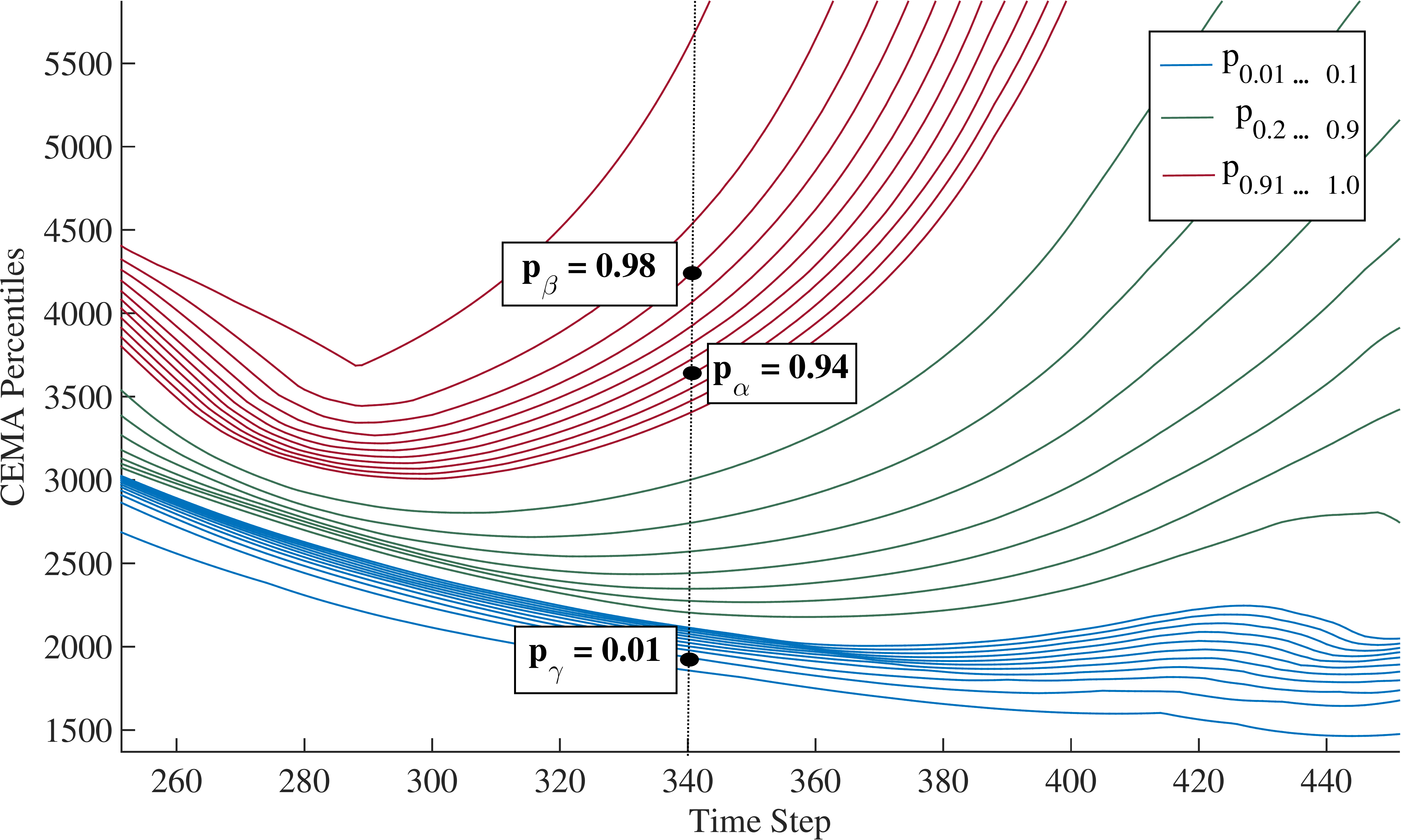}
\caption{\label{fig:p_indicator} 
  $\pmetric_{\alpha,\beta,\gamma}(t)$ values evaluated at each time step. 
  Here we illustrate $\alpha_0=0.94$, $\beta=0.98$ and $\gamma=0.01$ for
  time step 340 of a simulation.
}
\end{figure}
 
We substitute the maximum with a high percentile, which we denote by the 
$\beta$-percentile where $\beta $ is typically in the range $[0.95,0.99]$ and the 
minimum with a low percentile, which we denote by the $\gamma$-percentile where 
$\gamma$ is typically in the range $[0.01,0.05]$. This substitution provides  
stability to our measurements,  without compromising what we want to measure. 
 
Consider the following notation: Let $A \in \RR^N$, be an array in sorted order.
The $\alpha$ percentile of this sorted array is exactly the entry $A_{\lceil \alpha N \rceil}$. 
We use $\pt_\alpha$ to refer to this value (i.e.,  $\pt_\alpha = A_{\lceil \alpha N \rceil}$.  
The $A$ array  will change at each step of  the simulation, and thus we will use $A(t)$ and  
$\pt_\alpha(t)$ to refer to the data  on step $t$ of the simulation. 

We define our indicator on a given array $A$ using 3 parameters: $\alpha$,
$\beta$ and $\gamma$.  As described above, we use $\pt_\beta(t)$ and
$\pt_\gamma(t)$ as substitutes for the maximum and the minimum of $A(t)$
respectively, see \Fig{p_indicator}. We  want to detect whether the range
covered by top quantiles shrinks, and $\alpha$ represents the lower end of the
top quantiles. Therefore, the  range of top quantiles we  measure is
$\left[\pt_\alpha(t), \pt_\beta(t)\right]$.  In our indicator, we choose $\alpha
< \beta$ (typically in the range $[0.95,0.99]$) and $\gamma$ (typically in the
range $[0.01,0.05]$). We measure the spread at time $t$ by the
\pmetric-indicator:
\be
\pmetric_{\alpha,\beta,\gamma}(t) = \frac{\pt_\alpha(t) - \pt_\gamma(t)}{\pt_\beta(t) - \pt_\gamma(t)}.
\label{eq:pmetric}
\ee

\noindent In this indicator,  the denominator corresponds to the full range of CEMA values,  while  the 
numerator  corresponds to the  range  after the  top quantiles are removed.
 When the CEMA values are uniformly distributed, 
 $\pmetric_{\alpha,\beta,\gamma}(t) \approx (\alpha - \gamma)/(\beta - \gamma)$  $\approx \alpha/\beta$.
When there is a significant shift towards lower values, $\pmetric_{\alpha,\beta,\gamma}(t)$ will become smaller.

\begin{figure}[H]
\centering
\includegraphics[width=\textwidth]{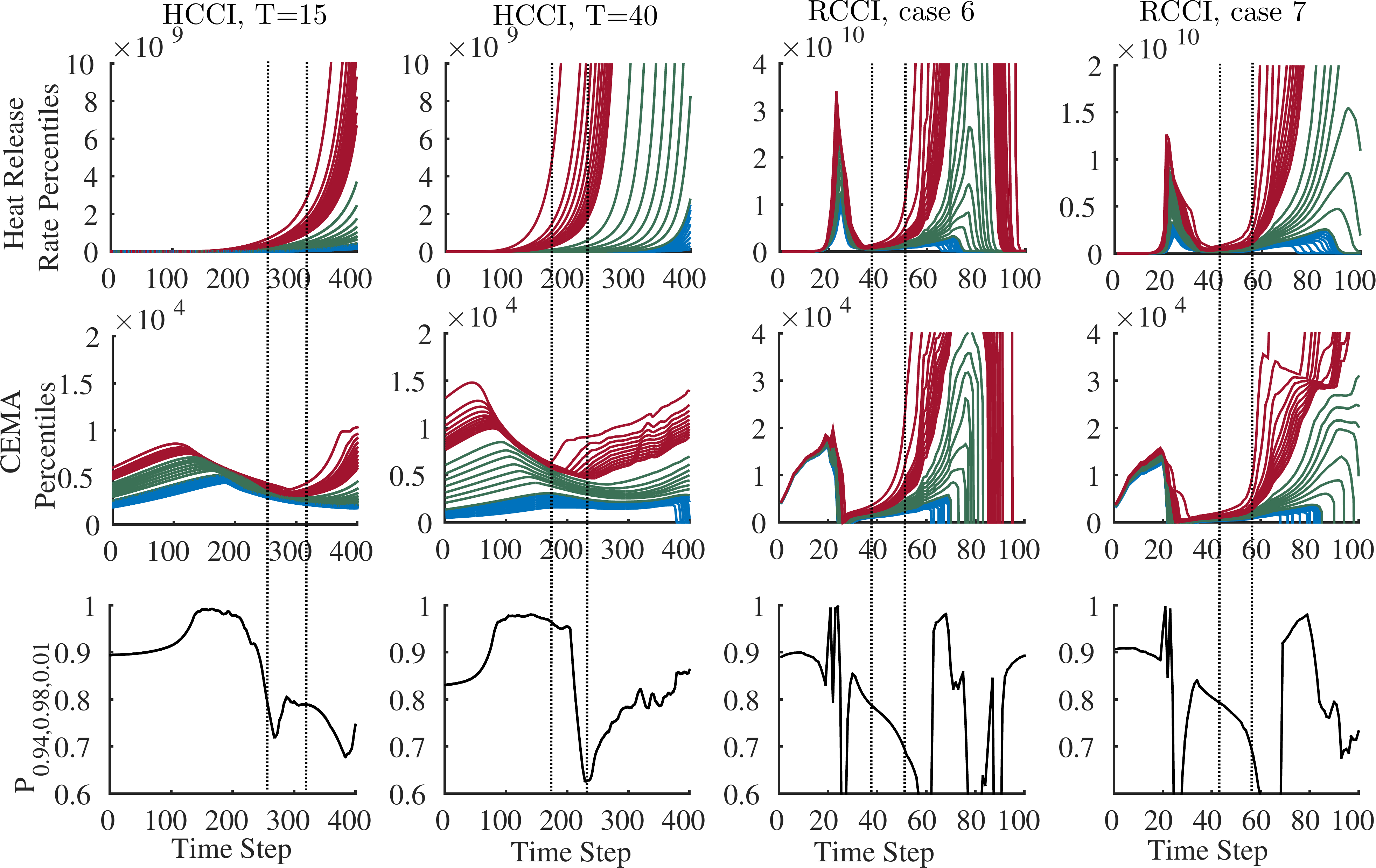}
\caption{\label{fig:CEMA} Plots showing the (top row) percentiles of the
heat release rate, (middle row) percentiles of CEMA, and (bottom row) the
\pmetric-indicator, as indicated. In the percentile plots, the lowest
blue curve and the highest red curve correspond to the 1 and
100 percentiles ($p_{0.01}$ and $p_{1}$), respectively. The blue curves
correspond to $p_{0.01 \ldots 0.1}$, the green curves to $p_{0.2 \ldots 0.9}$
and the red curves to $p_{0.91 \ldots 1}$. The \pmetric-indicator shown is
evaluated for $\alpha_0=0.94$, $\beta=0.98$ and $\gamma=0.01$.
The vertical dotted lines crossing the images indicate a window of
acceptable ``true'' trigger time steps, as identified by a domain expert.
For the RCCI cases, the trigger time ranges are based on the High Temperature
Heat Release (HTHR), i.e., the second peak in the Heat Release Rate (HRR) profiles. }
\end{figure}

\Fig{CEMA} illustrates percentile plots for heat release (top row) and CEMA
(middle row).  In the percentile plots, the lowest
blue curve and the highest red curve correspond to the 1 and
100 percentiles ($p_{0.01}$ and $p_{1}$), respectively. The blue curves
correspond to $p_{0.01 \ldots 0.1}$, the green curves to $p_{0.2 \ldots 0.9}$
and the red curves to $p_{0.91 \ldots 1}$. 
The \pmetric-indicator evaluated using \Eqn{pmetric} for $\alpha=0.94$,
$\beta=0.98$ and $\gamma=0.01$ is shown in the bottom row of \Fig{CEMA}. 
Results are generated for four test cases described in \Tab{usecases}.
The vertical dotted lines were identified  by a domain expert who, via examination
of heat release and CEMA percentile plots, visually
located the time steps in the simulation where the mesh resolution and I/O
frequency should  be increased. We refer to these time steps  
as the ``true'' trigger time steps we wish to identify with the 
\pmetric-indicator and trigger functions. Note, for the RCCI cases, there are two ignition ranges.
To simplify the following exposition, we focus on the second rise in the heat release rate
profiles, as this is the ignition stage of interest to the scientists. However,
we note that our approach is robust in identifying the first ignition stage as
well.

\begin{table}[h]
  \caption{\label{tab:usecases} Four Combustion Use Cases analyzed in this
study. The ``true'' trigger time ranges are estimated based on $95-100^{\textrm{th}}$
percentiles of the heat release rate. The computed time ranges were evaluated
using our quantile sampling approach.  For the RCCI cases, the trigger time
ranges are based on the High Temperature Heat Release, i.e., the second peak in
the Heat Release Rate profiles.}
\centering
 \begin{tabular}{|c|c|c|c|c|} 
 \hline
 Problem & Number of & Number of & ``True'' Trigger  & Computed  \\
 Instance & Grid Points & Species & Time Range & Trigger Time \\ \hline\hline
 HCCI, T=15   & 451,584     & 28  & 250-315 & 250-262\\  \hline
 HCCI, T=40   & 451,584     & 28  & 175-225 & 213-220 \\  \hline
 RCCI, case 6 & 2,560K   & 116 & 38-50   & 28-45\\  \hline
 RCCI, case 7 & 2,560K--10,240K   & 116 & 42-58   & 35-50\\  \hline
 \end{tabular}
 \end{table}

\subsection{Defining a Trigger}
\label{sec:trigger} 

In addition to defining a noise-resistant indicator function, we also need to define a trigger
function that returns a boolean value, capturing whether a property of the
indicator has been met. Looking at \Fig{CEMA}, we notice that across all
experiments from \Tab{usecases}, the \pmetric-indicator is decreasing during the true trigger time
step windows.  Therefore, we seek to find a value 
$\thresh_{\pmetric} \in (0,1)$, such that $\pmetric_{\alpha,\beta,\gamma}(t)$ crosses $\thresh_{\pmetric}$ \emph{from
above}, as the simulation time $t$ progresses. 

\Fig{Thresholds} plots the trigger time steps as a function of
$\thresh_{\pmetric}$ for a variety of configurations of $\alpha$ and $\beta$
for the four use cases desribed in \Tab{usecases}.  
The horizontal dashed lines indicate the true trigger range identified by our
domain expert.  We consider those values of $\thresh_{\pmetric}$ that fall within the
horizontal dashed lines to be viable $\thresh_{\pmetric}$ values for our trigger.  We find that across all
use cases, there are similar viable ranges of values for $\thresh_{\pmetric}$
where the predicted trigger time steps do not exhibit large variations. In
\Fig{Triggers_thresh} we provide a plot the trigger time steps as a function of
$\thresh_{\pmetric}$ for $\pmetric_{\alpha=0.94, \beta=0.98,\gamma=0.01}(t)$
that shows the viable range of $\thresh_{\pmetric}$ is $[ 0.725, 0.885 ]$.

\begin{figure}[htpb]
\centering
\includegraphics[width=0.8\textwidth]{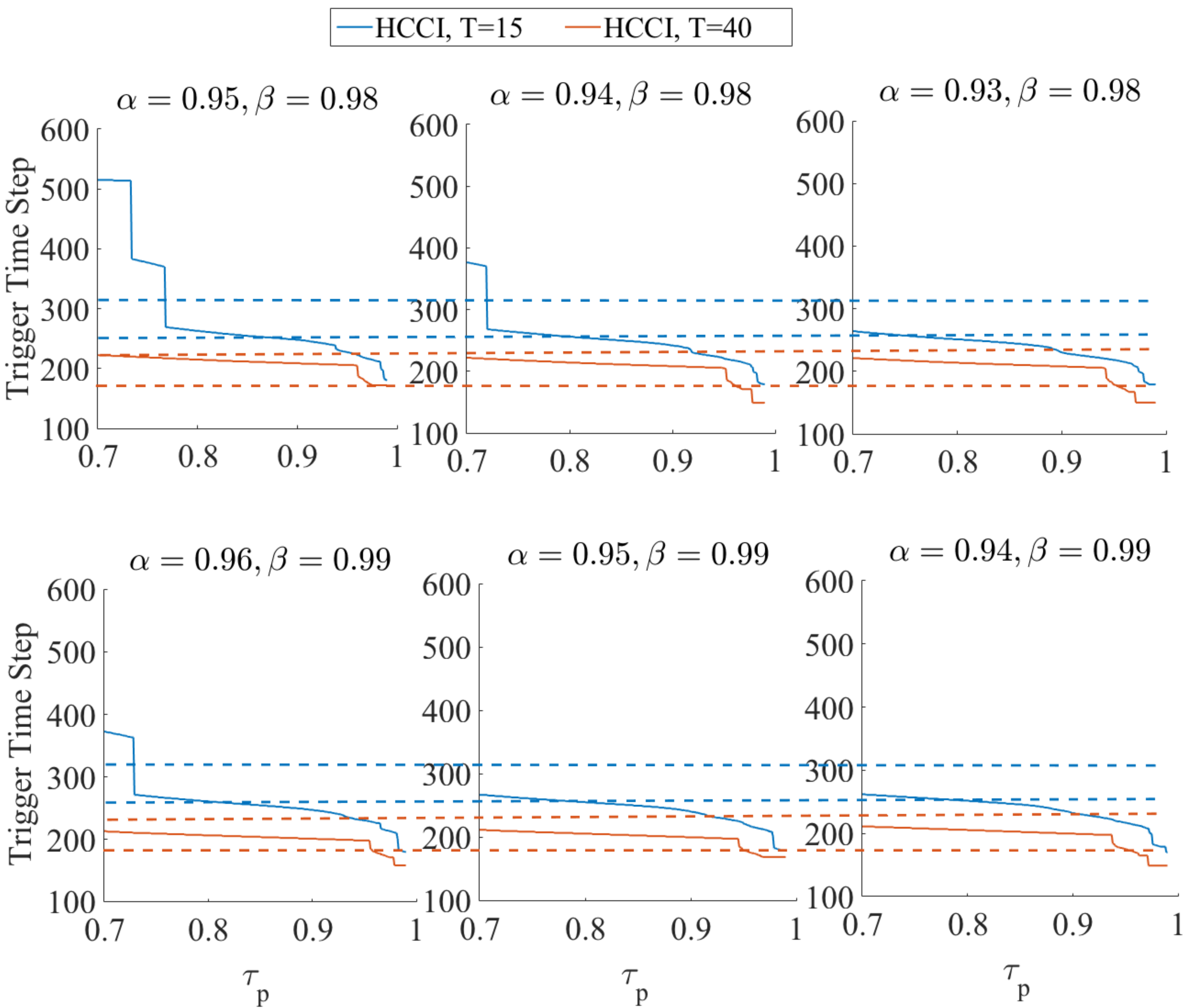}
\includegraphics[width=0.8\textwidth]{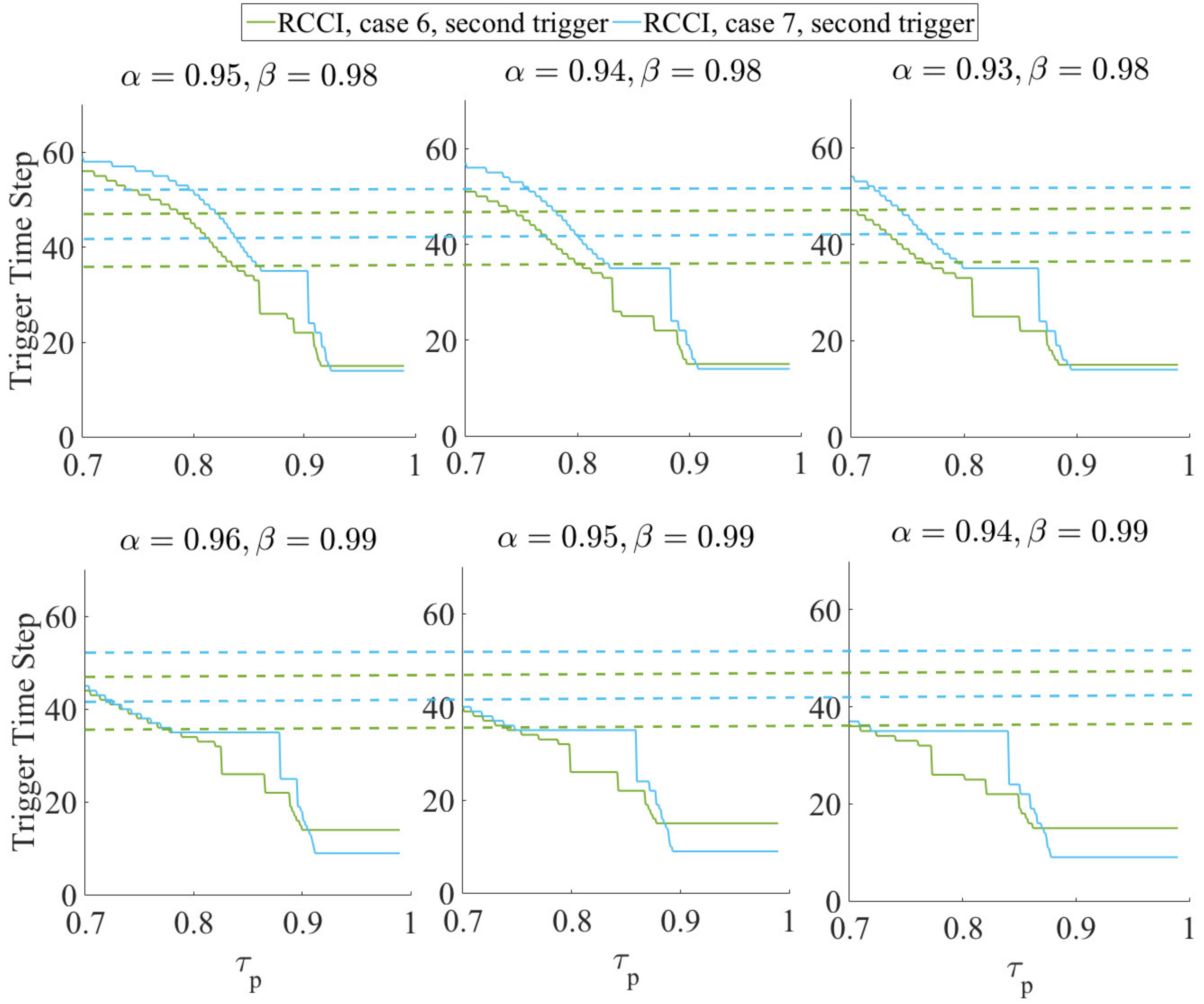}
\caption{\label{fig:Thresholds} 
This figure shows the trigger time steps as a function of
$\thresh_{\pmetric}$ for a variety of configurations of $\alpha$ and $\beta$
for the four use cases desribed in \Tab{usecases}.  
The horizontal dashed lines indicate the true trigger range identified by our
domain expert.  The \pmetric-indicator is evaluated with percentiles computed
using all $N$ grid points for the different use cases, as indicated. } 
\end{figure}

\begin{figure}[t]
\centering
\includegraphics[width=0.6\textwidth]{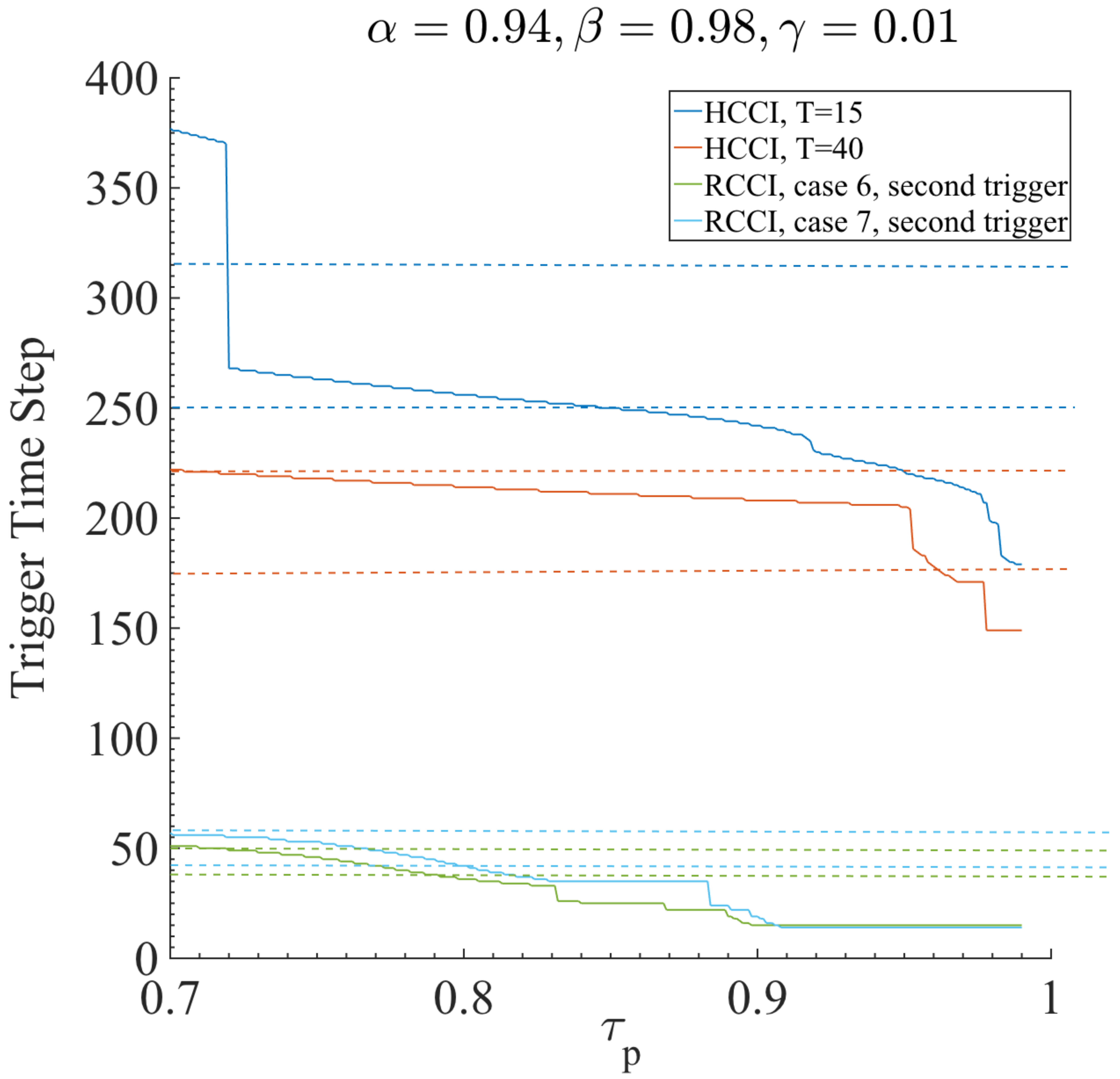}
\caption{\label{fig:Triggers_thresh} This figure plots the trigger time steps as a function of
$\thresh_{\pmetric}$ for $\pmetric_{\alpha=0.94, \beta=0.98,\gamma=0.01}(t)$.
  There is a range of viable values of $\thresh_{\pmetric} \in [0.725, 0.885]$ that
  predict early stage heat release.
}
\end{figure}
 
\section{Computing Indicators and Triggers Efficiently: A Sublinear Approach}
\label{sec:sample} 
The previous section showed that a CEMA-based \pmetric-indicator and
trigger are robust to noise fluctuations and act as a precursor to rapid heat
release in combustion simulations. In this section we provide some details
regarding its computational cost, which can be prohibitive for large-scale
simulations.  We then introduce an efficient method  to estimate the
\pmetric-indicator. Our method is based on quantile sampling and it comes with
provable bounds on accuracy as a function of the number of samples.  Most
importantly,  the required number of samples for a specified accuracy is
independent of  the size of the problem, hence our sampling based algorithms
offers  excellent scalability. 

\subsection{Computational Cost of CEMA}
Although CEMA is useful for predicting ignition, it is expensive to
compute and thus historically has not been used as a predictive measure. 
Computing CEMA values involves constructing  a large, dense matrix at every grid
point, and computing its eigendecomposition. For the use cases considered here, the time taken to compute the
CEMA values scales as the time taken to compute the eigendecomposition of an $M\times M$ matrix,
where $M$ is the number of species.
Since the Jacobian does not have any spatial structure, the time taken for the CEMA computation step is
$\textbf{O}(M^3)$, which makes it increasingly expensive for larger chemical mechanisms. As seen in 
\Tab{cema_costs}, for the ethanol
HCCI case presented here that was simulated with $28$-species, the cost of computing the CEMA value
at every grid point was approximately $5$ times the cost of one simulation time-step. The RCCI case, 
on the other hand included $116$-species and the cost was roughly $60$ times that of a single time step.
Although it is infeasible to compute CEMA at every grid point,  our indicator
function is defined in 
terms of distribution of  CEMA values, $\cema(t)$, and this can be easily approximated by a sampling mechanism, that has provable 
guarantees on the error. 

\subsection{Approximating percentiles by sampling} \label{sec:samp}
To remind our notation,  we have an array $A \in \RR^N$, in sorted order.
Our aim is to estimate the $\alpha$-percentile of $A$. (We use $\pt_\alpha$ to denote
the percentiles.) Note that this 
is exactly the entry $A_{\lceil \alpha N \rceil}$. Here is a simple sampling procedure. \\
\begin{compactenum}
	\item Sample $\samp$ independent, uniform indices $r_1, r_2, \ldots, r_\samp$
	in $\{1,2,\ldots,N\}$. \\ Denote by $\widehat{A}$ the sorted array $[A(r_1),A(r_2),\ldots,A(r_\samp)]$.
	\item Output the $\alpha$-percentile of $\widehat{A}$ as the estimate, $\widehat{\pt}_\alpha$. \\
\end{compactenum}

In the next section, we quantitatively show that our estimation, $\widehat{\pt}_\alpha$, is approximately close to the the true $\pt_\alpha$.
Such sampling arguments were also used in automatic generation of colormaps for massive data~\cite{TBSP13}.

\subsection{Theoretical bounds on performance}
Our analysis  relies on the  following fundamental result by Hoeffding, which provides a concentration inequality for sums of independent random variables.

\begin{theorem}[Hoeffding \cite{Ho63} or Theorem 1.1 in~\cite{DuPa09}]
  \label{thm:Hoeffding}
  Let $X_1, X_2, \dots, X_k$ be independent random variables with $0
  \leq X_i \leq 1$ for all $i=1,\dots,k$.  Define $X =
  \frac{1}{k} \sum_{i=1}^k X_i$. 
  For any positive $t$, we have \[ \Pr[|X - \EX[X]| \geq t] \leq 2 \exp(-2 t^2/k).\]
\end{theorem}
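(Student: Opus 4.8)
The plan is to prove \Thm{Hoeffding} by the exponential-moment (Chernoff--Cram\'er) method. First I would reduce to a one-sided estimate: the hypotheses are symmetric under the substitution $X_i \mapsto 1 - X_i$, so it suffices to bound $\Pr[X - \EX[X] \geq t]$, and a union bound over the two tails produces the factor $2$. Write $S = \sum_{i=1}^{k}(X_i - \EX[X_i])$, so that $X - \EX[X] = S/k$. For every $s > 0$, Markov's inequality applied to the nonnegative random variable $e^{sS}$ gives $\Pr[S \geq kt] \leq e^{-skt}\,\EX[e^{sS}]$, and independence factorizes the moment generating function as $\EX[e^{sS}] = \prod_{i=1}^{k}\EX\big[e^{s(X_i - \EX[X_i])}\big]$.

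The technical core is a pointwise bound on each factor, namely Hoeffding's lemma: if $Z$ is a mean-zero random variable supported on an interval $[a,b]$, then $\EX[e^{sZ}] \leq \exp\big(s^2(b-a)^2/8\big)$. I would establish this by convexity of $z \mapsto e^{sz}$: for $z \in [a,b]$ one has $e^{sz} \leq \tfrac{b-z}{b-a}e^{sa} + \tfrac{z-a}{b-a}e^{sb}$; taking expectations and using $\EX[Z] = 0$ gives $\EX[e^{sZ}] \leq e^{\psi(s)}$ for an explicit function $\psi$ with $\psi(0) = \psi'(0) = 0$, and a second-order Taylor expansion then yields $\psi(s) \leq s^2(b-a)^2/8$ once one checks the uniform bound $\psi''(s) \leq (b-a)^2/4$ (a Bernoulli-variance estimate of the form $q(1-q) \leq 1/4$). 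Applying the lemma to $Z = X_i - \EX[X_i]$, which lies in an interval of length $1$, bounds each factor by $e^{s^2/8}$, hence $\EX[e^{sS}] \leq e^{ks^2/8}$.

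Combining the pieces, $\Pr[S \geq kt] \leq \exp\big(ks^2/8 - skt\big)$ for every $s > 0$; the last step is to minimize this exponent over the free parameter $s$, which occurs at $s = 4t$ and gives $\exp(-2kt^2)$ for the one-sided tail, hence $2\exp(-2kt^2)$ after doubling. (The displayed exponent $-2t^2/k$ is the one for the unnormalized sum $\sum_i X_i$; for the normalized average the same computation gives the sharper $-2kt^2$, and in either convention the stated inequality follows.) I expect the only genuine obstacle to be the proof of Hoeffding's lemma, specifically the verification of the uniform bound $\psi''(s) \leq (b-a)^2/4$; this is the source of the constant $\tfrac18$ and hence of the $2$ in the final exponent, while everything else is routine bookkeeping.
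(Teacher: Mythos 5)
Your proof is correct and is the standard exponential-moment argument (Markov's inequality applied to $e^{sS}$, Hoeffding's lemma via convexity and the uniform bound $\psi''(s)\le (b-a)^2/4$, then optimization at $s=4t$); the paper itself offers no proof of this theorem, importing it by citation from Hoeffding and from Dubhashi--Panconesi, where precisely this argument appears. You are also right to flag the exponent: $-2t^2/k$ is the form appropriate to the unnormalized sum, and for the average $X$ as defined the same computation yields the sharper $-2kt^2$, of which the stated bound is a weaker consequence for $k\ge 1$, so the theorem as printed still holds.
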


\begin{lemma} \label{lem:perc} Fix $\alpha$ and parameters $\delta, \eps \in (0,1)$, such that $\alpha > \eps$ and $\alpha + \eps \leq 1$.
Set $\samp = \cei{\frac{\log(4/\delta)}{2\eps^2}}$. Then $\widehat{\pt}_\alpha \in [\pt_{\alpha - \eps},\pt_{\alpha+\eps}]$
with probability at least $1-\delta$.
\end{lemma}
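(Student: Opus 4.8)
The plan is to write $\widehat{\pt}_\alpha$ explicitly as the order statistic $\widehat{A}_{\cei{\alpha\samp}}$ of the sorted sample and to bound the two one-sided failure events, $\{\widehat{\pt}_\alpha > \pt_{\alpha+\eps}\}$ and $\{\widehat{\pt}_\alpha < \pt_{\alpha-\eps}\}$, separately, closing with a union bound. Each of these events is equivalent to a statement about how many of the $\samp$ sampled values lie below a fixed threshold, so each reduces to a deviation bound for a mean of independent $\{0,1\}$ variables, which is exactly what Theorem~\ref{thm:Hoeffding} supplies. The hypotheses $\alpha > \eps$ and $\alpha+\eps \le 1$ enter only to guarantee that $\pt_{\alpha-\eps}$ and $\pt_{\alpha+\eps}$ are well-defined percentiles of $A$.

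For the upper tail, $\widehat{\pt}_\alpha > \pt_{\alpha+\eps}$ means the $\cei{\alpha\samp}$-th smallest sampled value exceeds $\pt_{\alpha+\eps}$, so strictly fewer than $\cei{\alpha\samp}$ of the samples are $\le \pt_{\alpha+\eps}$. Let $X_i$ indicate that the $i$-th sample is at most $\pt_{\alpha+\eps}$ and set $X = \frac{1}{\samp}\sum_{i=1}^\samp X_i$. Because $\pt_{\alpha+\eps} = A_{\cei{(\alpha+\eps)N}}$ and $A$ is sorted, at least $\cei{(\alpha+\eps)N} \ge (\alpha+\eps)N$ entries of $A$ are $\le \pt_{\alpha+\eps}$, hence $\EX[X] \ge \alpha+\eps$. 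On the failure event $\samp X < \cei{\alpha\samp}$, and since $\samp X$ is an integer this gives $\samp X \le \cei{\alpha\samp}-1 < \alpha\samp$, i.e. $X < \alpha \le \EX[X]-\eps$; Theorem~\ref{thm:Hoeffding} then bounds this probability by $2\exp(-2\eps^2\samp)$. The lower tail is symmetric: $\widehat{\pt}_\alpha < \pt_{\alpha-\eps}$ forces at least $\cei{\alpha\samp}$ samples to be strictly below $\pt_{\alpha-\eps} = A_{\cei{(\alpha-\eps)N}}$, and the number of entries of $A$ strictly below this value is at most $\cei{(\alpha-\eps)N}-1 < (\alpha-\eps)N$, so for $Y_i$ indicating ``$i$-th sample $< \pt_{\alpha-\eps}$'' and $Y = \frac{1}{\samp}\sum_i Y_i$ we have $\EX[Y] < \alpha-\eps$, while the failure event gives $Y \ge \alpha > \EX[Y]+\eps$, again bounded by $2\exp(-2\eps^2\samp)$.

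Summing the two contributions, $\Pr[\widehat{\pt}_\alpha \notin [\pt_{\alpha-\eps},\pt_{\alpha+\eps}]] \le 4\exp(-2\eps^2\samp)$, and the choice $\samp = \cei{\log(4/\delta)/(2\eps^2)}$ makes the right-hand side at most $\delta$, which is the claim. I expect the one place that needs care to be the bookkeeping with the ceiling functions and with possible ties among the entries of $A$: one must keep each empirical count that defines a failure event on the ``wrong'' side of the corresponding population count by a full margin of $\eps$, which is exactly why the upper-tail argument counts entries $\le \pt_{\alpha+\eps}$ but the lower-tail argument counts entries $< \pt_{\alpha-\eps}$. Beyond that asymmetry, the argument is a routine substitution into Theorem~\ref{thm:Hoeffding}.
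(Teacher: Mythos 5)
Your proof is correct and follows essentially the same route as the paper's: indicator variables for samples falling below a threshold, Hoeffding's inequality on each tail, and a union bound, with the choice of $\samp$ making $4\exp(-2\eps^2\samp) \le \delta$. The only cosmetic difference is that you define the indicators via the sampled \emph{values} relative to $\pt_{\alpha\pm\eps}$ (with the careful $\le$ versus $<$ asymmetry to handle ties) whereas the paper defines them via the sampled \emph{indices} relative to $(\alpha\pm\eps)N$; both yield the same bounds.
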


\vspace{2ex}
\begin{proof} Let $X_i$ be the (Bernoulli) indicator random variable for the event $r_i < (\alpha-\eps)N$. (i.e.,$X_i = 1$, if $r_i < (\alpha-\eps)N$,
and zero otherwise.) Note that 
\[\EX[X_i] = \Pr[X_i = 1]=\Pr[r_i < (\alpha-\eps)N] < \alpha - \eps.
\]
By linearity of expectation, $\EX[\sum_{i\leq \samp} X_i] < \samp(\alpha - \eps)$. By Hoeffding's inequality (\Thm{Hoeffding}),
\[\Pr[\sum_{i \leq \samp} X_i - \EX[\sum_{i \leq \samp} X_i] \geq \eps \samp] \leq 2 \exp(-2 \eps^2\samp). \]
The latter is at most $\delta/2$. Thus, with probability $>1 - \delta/2$,
\[\sum_{i \leq \samp} X_i \leq \EX[\sum_{i \leq \samp} X_i] + \eps k < \alpha k.\]
In plain English, with probability at least $1-\delta/2$, the number of random indices
strictly less than $(\alpha-\eps)N$ is strictly less than $\alpha k$.

We repeat a similar argument with indicator random variable $Y_i$ for the event $r_i < (\alpha+\eps)N$. So $\EX[Y_i] > \alpha + \eps$
and $\EX[\sum_{i \leq \samp} Y_i] > \samp(\alpha + \eps)$. By Hoeffding's inequality,
\[ \Pr[\EX[\sum_{i \leq \samp} Y_i] - \sum_{i \leq \samp} Y_i > \eps \samp] < \delta/2.
\]
With probability at least $1-\delta/2$, the number of random indices at most $(\alpha+\eps)N$
is strictly more than $\alpha k$.

By the union bound on probabilities, both events hold simultaneously with probability $>1-\delta$.
In this situation, the $\alpha$-percentile of the sample lies between the $\pt_{\alpha-\eps}$
and $\pt_{\alpha+\eps}$.
\end{proof}
\begin{table}[t]
\caption{ Additional cost associated with computing CEMA indices at every grid point (no sub-sampling) for two chemical mechanisms. Cost is given in seconds of wall-clock time per overall simulation time step. \label{tab:cema_costs}}
\begin{center}
 \begin{tabular}{| c | c | c | c | c |}
 \hline \hline
Fuel         & Mechanism       &  Cost without    & Cost with       &  Cost     \\
             & size (species)  &    CEMA (sec/ts) & CEMA (sec/ts)   & factor    \\ \hline \hline 
 Ethanol     &    $28$           &     $0.3$      &  $1.5$            &  $5$     \\
\hline
Primary Reference & & & & \\
Fuel (PRF) & $116$          &     $3.0$      &  $180.0$          &  $60$    \\\hline \hline 
\end{tabular}
\end{center}
\end{table}  

It bears emphasizing that the number of  samples,  $\samp$, is independent of  the problem size, $N$, and only depends on $\eps,\delta$.
So, the required number of samples  only depends on the desired accuracy, not on the size of the data. This is
the key to the scalability of our approach.

To compute the \pmetric-indicator, $\pmetric_{\alpha,\beta,\gamma}$, we just employ the procedure
above to get estimates $\widehat{\pt}_{\alpha}, \widehat{\pt}_{\beta}, \widehat{\pt}_{\gamma}$.
We can use the same samples (with only an additive increase to $\samp$) for all estimates, so we do not have to repeat the procedure $3$ times.
That yields the approximate \pmetric-indicator, denoted by $\widehat{\pmetric}_{\alpha,\beta,\gamma}$.

\begin{theorem} \label{thm:pmetric} Fix $\alpha, \beta, \gamma$ and parameters $\delta, \eps \in (0,1)$
such that $\alpha < \beta - 2\eps$ and $\eps < \min(\alpha,\beta,\gamma)$. Set $\samp = \cei{\frac{\log(12/\delta)}{2\eps^2}}$.
With probability $>1-\delta$, 
\[ \widehat{\pmetric}_{\alpha,\beta,\gamma} \in [\pmetric_{\alpha-\eps,\beta+\eps,\gamma+\eps},\pmetric_{\alpha+\eps,\beta-\eps,\gamma-\eps}].\]
\end{theorem}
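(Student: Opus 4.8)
The plan is to reduce \Thm{pmetric} to three applications of \Lem{perc}, one for each of the three percentiles $\pt_\alpha, \pt_\beta, \pt_\gamma$ that appear in the definition \Eqn{pmetric} of the \pmetric-indicator. First I would observe that since all three estimates $\widehat{\pt}_\alpha, \widehat{\pt}_\beta, \widehat{\pt}_\gamma$ are computed from the \emph{same} sample of size $\samp$, I can apply \Lem{perc} with confidence parameter $\delta/3$ to each one separately: with $\samp = \cei{\log(4/(\delta/3))/(2\eps^2)} = \cei{\log(12/\delta)/(2\eps^2)}$, each event $\widehat{\pt}_\alpha \in [\pt_{\alpha-\eps}, \pt_{\alpha+\eps}]$, $\widehat{\pt}_\beta \in [\pt_{\beta-\eps}, \pt_{\beta+\eps}]$, $\widehat{\pt}_\gamma \in [\pt_{\gamma-\eps}, \pt_{\gamma+\eps}]$ fails with probability at most $\delta/3$. (I need $\eps < \min(\alpha,\beta,\gamma)$ and $\beta+\eps \le 1$, which the hypotheses and $\beta < 1$ supply, so that \Lem{perc} is applicable in each case.) By the union bound, all three containments hold simultaneously with probability $> 1-\delta$.

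Next, conditioned on this good event, I would show that the monotonicity of the map $(a,b,c) \mapsto (a-c)/(b-c)$ in each of its arguments (on the relevant range) forces $\widehat{\pmetric}_{\alpha,\beta,\gamma}$ into the stated interval. Writing $\widehat{\pmetric}_{\alpha,\beta,\gamma} = (\widehat{\pt}_\alpha - \widehat{\pt}_\gamma)/(\widehat{\pt}_\beta - \widehat{\pt}_\gamma)$, note that since $A$ is sorted, $\widehat{\pt}_x$ is nondecreasing in $x$, so the numerator is nonnegative and (because $\alpha < \beta - 2\eps$ guarantees $\alpha+\eps < \beta-\eps$, hence $\widehat{\pt}_\alpha \le \pt_{\alpha+\eps} \le \pt_{\beta-\eps} \le \widehat{\pt}_\beta$) the denominator is positive. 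The ratio increases when the numerator increases and decreases when the denominator increases; to push $\widehat{\pmetric}$ as large as possible I take $\widehat{\pt}_\alpha$ at its upper bound $\pt_{\alpha+\eps}$, $\widehat{\pt}_\gamma$ at its lower bound $\pt_{\gamma-\eps}$ (which, since it appears with a $+$ sign in the numerator after expansion — actually one must check both occurrences), and $\widehat{\pt}_\beta$ at its lower bound $\pt_{\beta-\eps}$, yielding the upper endpoint $\pmetric_{\alpha+\eps,\beta-\eps,\gamma-\eps}$; the symmetric choice gives the lower endpoint $\pmetric_{\alpha-\eps,\beta+\eps,\gamma+\eps}$.

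The one genuinely delicate point — and the step I expect to be the main obstacle — is the dependence on $\gamma$, which appears in \emph{both} the numerator and the denominator of \Eqn{pmetric}. One cannot simply say "the ratio is monotone in $\widehat{\pt}_\gamma$"; one has to verify the direction. Computing $\frac{\partial}{\partial c}\frac{a-c}{b-c} = \frac{-(b-c) + (a-c)}{(b-c)^2} = \frac{a-b}{(b-c)^2}$, which is negative whenever $a < b$ (our case), so the ratio is \emph{decreasing} in $c = \widehat{\pt}_\gamma$. Hence to maximize $\widehat{\pmetric}$ we want $\widehat{\pt}_\gamma$ small, i.e. at $\pt_{\gamma-\eps}$ — consistent with the claimed upper endpoint $\pmetric_{\alpha+\eps,\beta-\eps,\gamma-\eps}$ — and to minimize it we want $\widehat{\pt}_\gamma$ large, at $\pt_{\gamma+\eps}$, matching $\pmetric_{\alpha-\eps,\beta+\eps,\gamma+\eps}$. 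I would present this sign computation carefully, then assemble the two chains of inequalities, and conclude. The remaining bookkeeping — checking that $\alpha-\eps, \beta+\eps, \gamma-\eps$ all lie in $(0,1]$ so the expressions $\pmetric_{\cdot,\cdot,\cdot}$ are well-defined, which follows from $\eps < \min(\alpha,\beta,\gamma)$, $\beta < 1$, and $\alpha < \beta - 2\eps$ — is routine.
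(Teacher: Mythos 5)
Your proposal is correct and follows essentially the same route as the paper's proof: three applications of \Lem{perc} with confidence parameter $\delta/3$ each, a union bound, and then the monotonicity of $(a-c)/(b-c)$ in each argument (the paper handles the $\gamma$-dependence with the same observation you derive via the partial derivative, namely that $(x-z)/(y-z)$ is decreasing in $z$ for fixed $x<y$). Your treatment is, if anything, slightly more explicit about the positivity of the denominator and the well-definedness of the shifted percentiles, but the argument is the same.
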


\begin{proof} Apply \Lem{perc} for each of $\alpha, \beta, \gamma$ with error parameter $\delta/3$.
That gives the value of $\samp$ as stated in the theorem. By the union bound on probabilities,
all the following hold simultaneously with probability $>1-\delta$:
$\widehat{\pt}_{\alpha} \in [\pt_{\alpha-\eps},\pt_{\alpha+\eps}]$,
$\widehat{\pt}_{\beta} \in [\pt_{\beta-\eps},\pt_{\beta+\eps}]$, and
$\widehat{\pt}_{\gamma} \in [\pt_{\gamma-\eps},\pt_{\gamma+\eps}]$,
Hence,
$$ \widehat{\pmetric}_{\alpha,\beta,\gamma} = \frac{\widehat{\pt}_{\alpha} - \widehat{\pt}_{\gamma}}{\widehat{\pt}_\beta - \widehat{\pt}_\gamma}
\leq \frac{\pt_{\alpha+\eps} - \widehat{\pt}_{\gamma}}{\pt_{\beta-\eps} - \widehat{\pt}_\gamma}
\leq \frac{\pt_{\alpha+\eps} - \pt_{\gamma-\eps}}{\pt_{\beta-\eps} - \pt_{\gamma-\eps}} $$
For the last inequality, observe that for fixed $x < y$, $(x-z)/(y-z)$ is a decreasing function of $z$.
An analogous argument proves the lower bound for $\widehat{\pmetric}_{\alpha,\beta,\gamma}$.
\end{proof}

\begin{figure}[thbp] 
   \centering
   \includegraphics[width=3.5in]{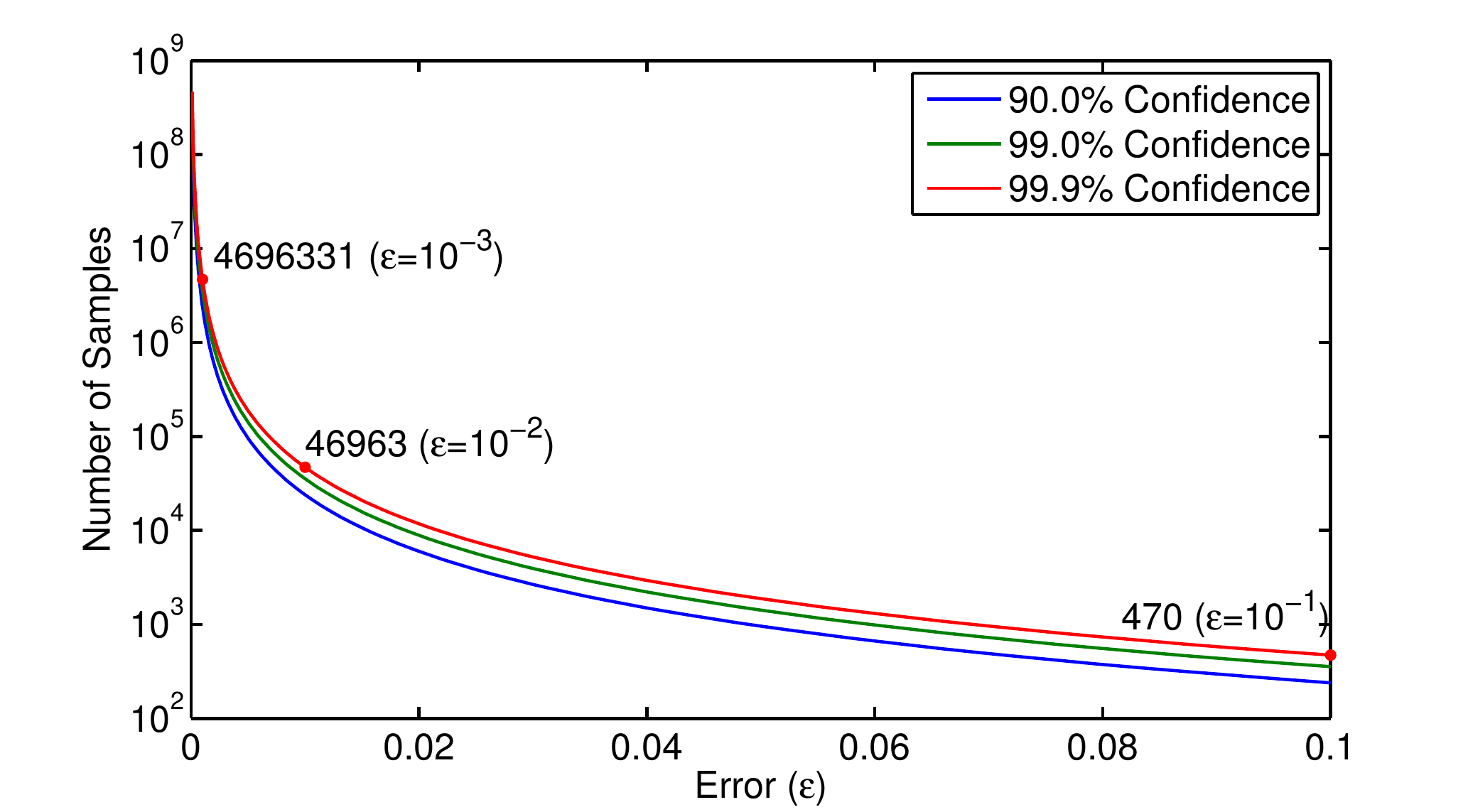} 
   \caption{ The number of samples needed for different error rates and
    different levels of confidence. A few data points at 99.9\%
    confidence are highlighted.}
   \label{fig:levelcurves}
\end{figure}

\Fig{levelcurves} shows the number of samples needed
for different error  and confidence rates. We show three different curves for
difference confidence levels. Increasing the confidence has minimal
impact on the number of samples. The number of samples is fairly low
for error rates of 0.1 or 0.01, but it increases with the inverse
square of the desired error. 
Nonetheless, the four million samples required for an error rate of $\eps = .001$ at 99.9\% confidence
requires only tens of samples per processor at the extreme scales.  In practice, $\eps = .01$ at 99.9\% confidence, thus 48,000  samples was enough to compute robust estimates. 

\subsubsection{Interpretation of the bounds}
Our bounds on  quantile estimations are based on which quantile we sample.  That
is our sampling algorithm may return the $\alpha+\eps$-th quantile  instead of
$\alpha$-th quantile, and we can quantify this error, $\eps$, as a function of
the number of samples.  However, it does not quantify  the difference between
${\pt}_{\alpha}$  and  $\widehat{\pt}_{\alpha}={\pt}_{\alpha+\eps}$.
Subsequently, \Thm{pmetric} shows that the range we sample can be made
arbitrarily close to the original range by increasing the number of samples,
and  bounds the error in the range for any given sample size. Yet, it does not bound the difference between $\pmetric_{\alpha,\beta,\gamma}$ and $\widehat{\pmetric}_{\alpha,\beta,\gamma}$, which depends on the distribution of the data.  However,  this is not a critical issue for our purposes, as we argue next, and empirically verify with experimental results in the next section.  

The difference between $\pmetric_{\alpha,\beta,\gamma}$ and $\widehat{\pmetric}_{\alpha,\beta,\gamma}$ will be disproportional to $\eps$ only when there are gaps in the distribution around one of the three parameters,  $\alpha$, $\beta$, or $\gamma$.  Note that these three parameters are user specified, and they are used to quantify the range of top percentiles. If the underlying distribution is such that we expect many such gaps frequently,  then  our  metric itself  will be extremely sensitive to the choice of  the input parameters, even if compute the metric exactly. That is our metric should not produce vastly different  results when we choose $\alpha =0.940$  or $\alpha =0.941$.  However, there is  still a possibility that such gaps  may form, just like any other  low probability event.  One trick to  improve robustness of our metric against such low probability events is to pick the  input parameters randomly from a specified range.  That is instead of specifying $\gamma$ as $0.03$ we can pick it randomly in the range $[0.02,0.04]$.  By such randomness, even if there is a gap at point, $\phi$, the probability that $\phi$ is in the $[\beta-\eps,\beta+\eps$] range will approach to zero with increasing number of  samples and thus decreasing $\eps$.    
  
However, we want to note that this is only a theoretical exercise. 
From a practical perspective, we have not observed  any  gaps in the distribution and ${\pt}_{\alpha+\eps}$ is a good estimate for ${\pt}_{\alpha}$, and it gets better with more samples.  For the  experiments in the remainder of this  paper, we have not  used  the randomization technique when choosing the parameters, $\gamma$, $\alpha$, and $\beta$.  

\subsection{Empirical  evaluation of the sampling based  algorithm}
In this section we present our empirical evaluation of the proposed  sampling
techniques.  Experiments  in this section  will focus on only the  evaluation
of  the proposed algorithm,  since we first want to verify  that the proposed
sampling  technique accurately estimates the \pmetric-indicator. In the next
section, we will put all pieces together  and study how  the
\pmetric-indicator and  the proposed technique perform together \emph{in situ}
as the simulation is running.

In the first set of experiments, we investigate the error in quantile ranges. For these experiments, we  use 16 randomly selected instances  of CEMA  distributions from various HCCI and RCCI simulations.   These  instances are named  such that the first part refers to the simulation  type  and case, and the last part refers to the  time step. 
We use sampling to estimate the $\alpha=0.94$ percentile, which returns an entry from  the distribution. Then we check the  percentile of this entry in the full data, say $\alpha+\eps$.  In the first set of experiments, we focus on this difference $\eps$, which we bounded in our theoretical analysis.   

\Fig{quantiles} (a) presents  the results of our investigation into the error of
quantile ranges for various data sets and increasing number of samples: 12,000, 24,000, and 48,000.  For this figure, we ran our sampling  algorithm 100 times  for each instance (i.e.,  a data set  and number of samples combination) and  computed $\eps$.  The figure presents the average  $|\eps |$ for each instance. As the figure shows, sampling  yields accurate  estimations in all data sets, and the error drops with increasing number of samples. It becomes extremely small for 48,000 samples.   Here an error of 0.001 means we will be using ${\pt}_{0.941}$ quantile instead of ${\pt}_{0.940}$.  We  did not find it necessary to investigate increasing the number of samples further.

\begin{figure}[htbp] 
   \centering
   \vspace{-1ex}
   \begin{tabular}{cc}
   \hspace*{-0.08\textwidth}  
     \includegraphics[height=1.5in]{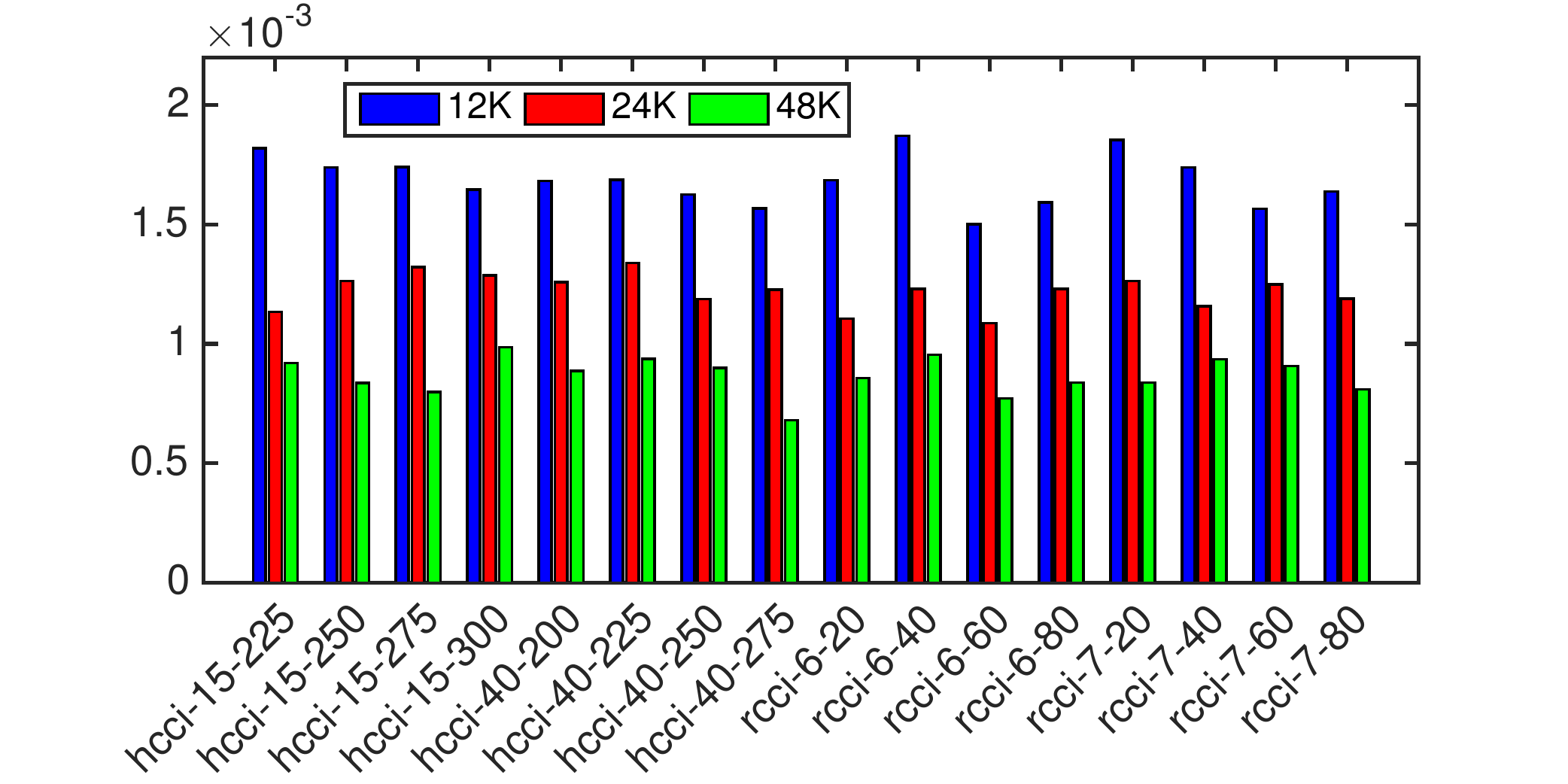} &
     \includegraphics[height=1.5in] {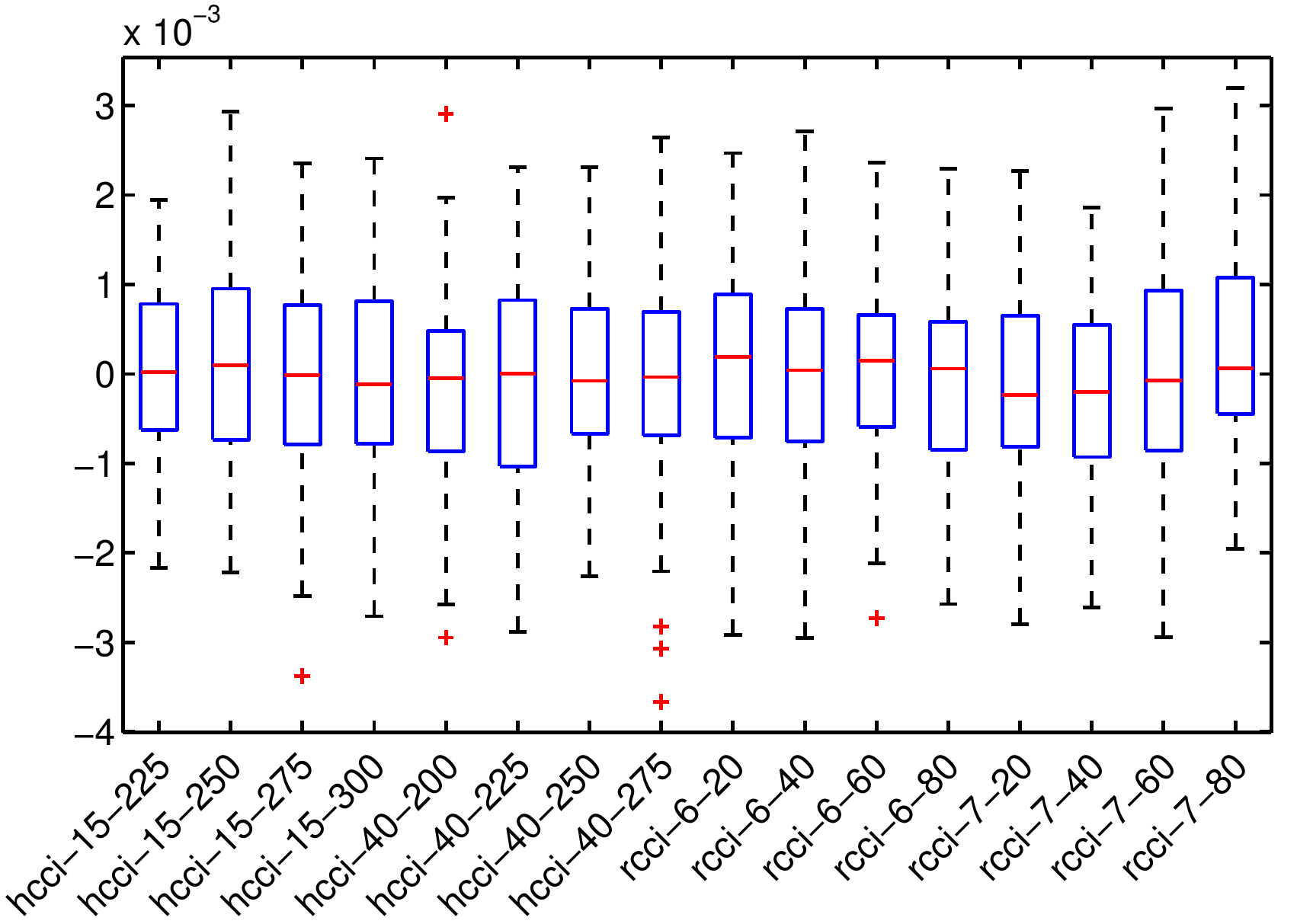} \\
     \textrm{(a)}Average error  & \textrm{(b)} Distribution of errors\\
   \end{tabular}
   \caption{(a) Error in the percentile sampled as
   the average of  absolute values of 100 runs on various data sets  with
 increasing number of samples. (b)  The distribution of errors  of each data set for the percentile study using 48,000 samples. }
   \label{fig:quantiles}
\end{figure}

In \Fig{quantiles} (b), we show the  distribution of errors in percentiles
sampled for the 100 runs of each dataset using 48,000 samples. 
In this  plot, the central mark (red) shows the median error, while the edges of the (blue) box are the 25th
and 75th percentiles. The whiskers extend to the most extreme points considered not
to be outliers, and the outliers (red plus marks) are plotted individually. As  this figure shows,  the estimates are consistently accurate, and  the results in practice are  much better than  those indicated by \Thm{pmetric}. According this theorem, 48,000 samples  lead to  an error of $\approx 0.01$ with a confidence of \%99.  This means in 100  experiments we expect to have 1 run for which the error is  $>0.01$.  However, in the 16  data sets with 100 runs each the maximum error was $0.005$, half of what the upper bound indicates. These results show that  sampling  enables us to sample a quantile that is consistently accurate.

In the next set of experiments, we look at how our indicator is affected by the
minor errors in the quantile. More specifically, we want to see  how the
difference between ${\pt}_{\alpha}$ and  ${\pt}_{\alpha+\eps}$ affect our
indicator.  
For those experiments,  we  used 16 randomly selected instances  of
CEMA  distributions from various HCCI and RCCI simulations and computed  the
\pmetric-indicator exactly, and using sampling  with parameters, $\gamma=0.01$ $\alpha=0.94$, and $\beta=0.98$. 
\begin{figure}[th] 
   \centering
   \begin{tabular}{cc}
  \hspace*{-0.08\textwidth}    
  \includegraphics[height=1.6in]{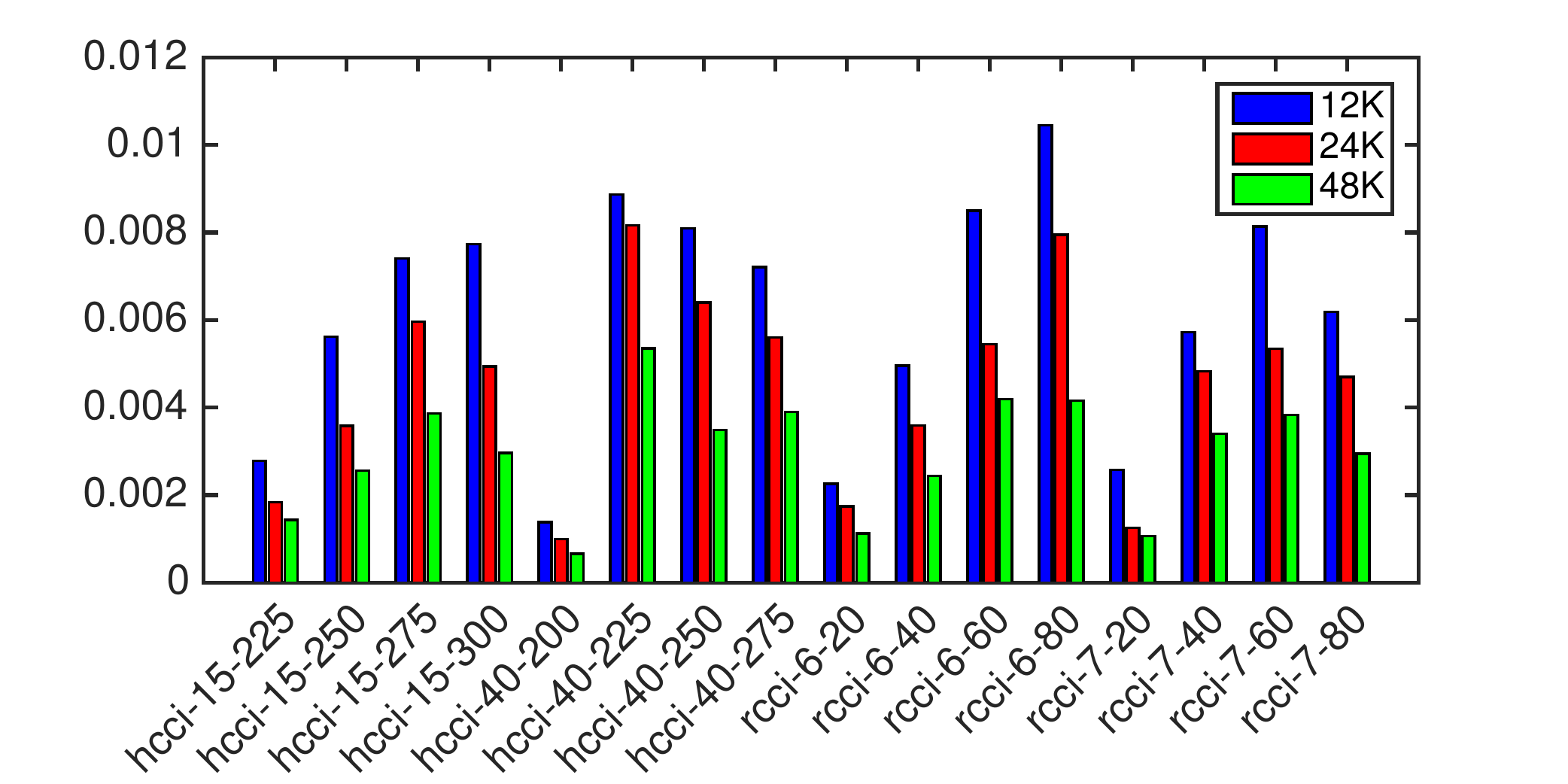}   &
  \includegraphics[height=1.6in]{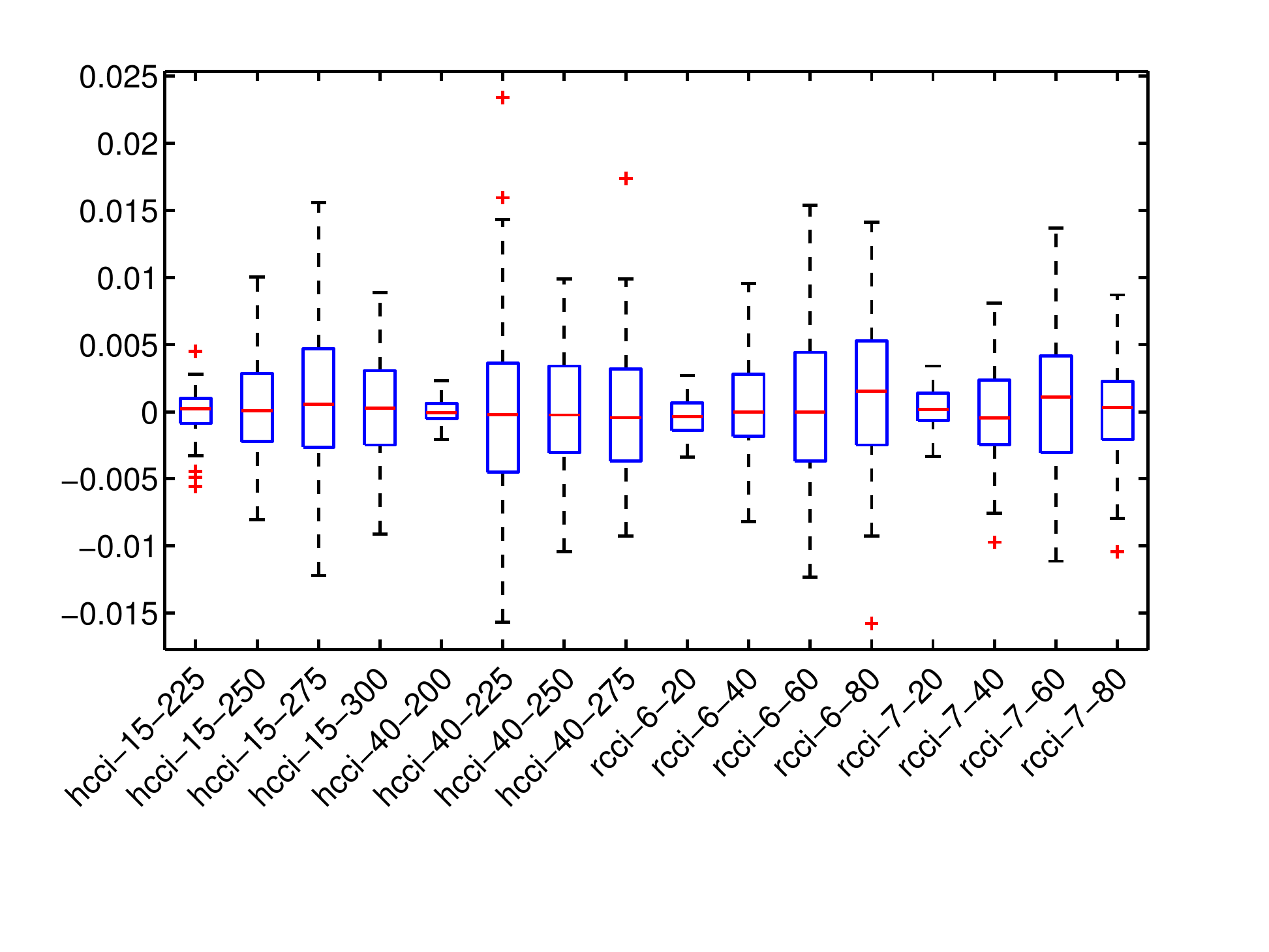} \\
 (a)  Average Error & (b) Distribution of errors  
\end{tabular}
\caption{(a)  Errors in  estimation of  the \pmetric-indicator for
 increasing number of samples.
(b)  The distribution of errors in  estimation of  the \pmetric-indicator  for 48,000
samples.}
   \label{fig:Boxquantiles}
\end{figure}
While we repeated the same experiments with different parameters, we have not observed any sensitivities of our tests to  the choice the parameters,  hence we will present results for only this setting.  

\Fig{Boxquantiles}(a) presents results for our indicator tests for various data sets and increasing number of
samples: 12,000, 24,000, and 48,000.  For this figure, we ran our sampling
algorithm 100 times  on each data set  and number of samples, and  computed
the difference between  exact  and estimated values of the \pmetric-indicator  
for each instance.  The figure presents the average absolute error for each 
instance. As the figure shows, sampling  produces accurate estimations in all 
data sets, and the error drops with increasing number of samples. The bounds 
on \Thm{pmetric} does not apply in this case. but regardless, the errors are 
very small, and certainly sufficient  to detect any trend in the distribution of the underlying values. 

\Fig{Boxquantiles}(b) shows the  distribution of errors for the
\pmetric-indicator test for 100 runs of each dataset using 48,000 samples. 
In this  plot, the central mark (red) shows the median error, while the edges of the (blue) box are the 25th
and 75th percentiles. The whiskers extend to the most extreme points considered not
to be outliers, and the outliers (red plus marks) are plotted individually. As  this figure shows,  the estimates are consistently accurate. 

\begin{figure}[htpb]
\includegraphics[width=\textwidth]{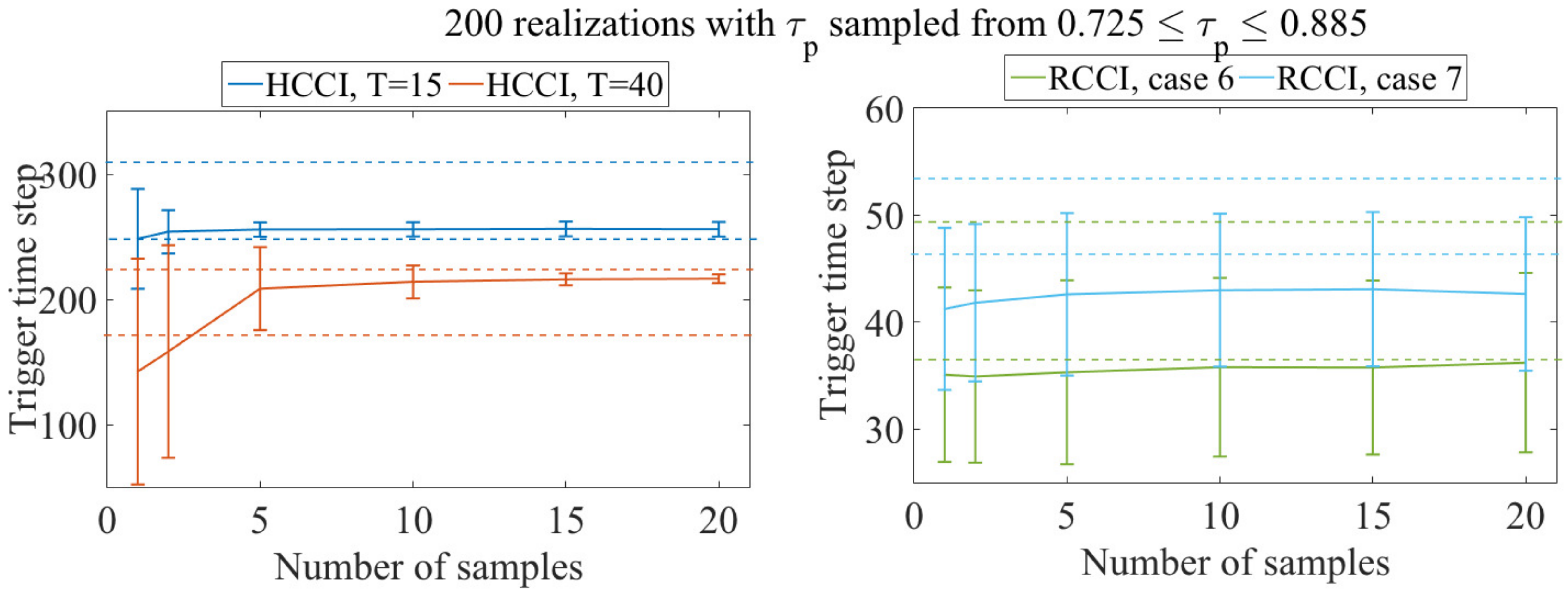} \\ \\
\caption{\label{fig:Triggers} Plots illustrating the variability of the trigger
time steps 
predicted by the \pmetric-indicator and trigger as a function of the number of
samples per processors.  The data for these plots was generated via 200
realizations of the \pmetric-indicator with $\alpha=0.94$, $\beta=0.98$ and
$\gamma=0.01$, and with $\thresh_{\pmetric}$ drawn from $[0.725, 0.885]$. 
 The horizontal dashed lines define the range of time steps within
which we would like to make the workflow transition (as identified by a domain expert).
}  
\end{figure}

Lastly, we performed a series of experiments examining the variation in the trigger time steps 
as a function of the number of samples used per processor.  The data for
\Fig{Triggers} was generated via 200 realizations of  the \pmetric-indicator with $\alpha=0.94$, $\beta=0.98$ and
$\gamma=0.01$, and $\thresh_{\pmetric}$ drawn from $[0.725, 0.885]$. The
horizontal dashed lines in this figure define the range of true trigger time steps within
which we would like to make the workflow transition (as identified by a domain expert).
This plot demonstrates that, even across a wide range of $\thresh_{\pmetric}$
values,  with a small number of samples per processor, the quantile sampling
approach can accurately estimate the  true trigger time steps as defined by the
domain expert. 
The next step will be  putting all
the pieces  together  to see how we can predict heat release \emph{in situ},
within a simulation run. 

\section{Putting all pieces together: Diagnosing heat release  with sublinear algorithms in S3D}
\label{sec:app}
  
The algorithm described in the previous section was deployed \emph{in situ} for a two-dimensional 
direct numerical simulation (DNS) of the ethanol HCCI problem (Case HCCI, T=40
in \Tab{usecases}).  
The DNS was run on half a million grid points with $784$ processors. $20$ points
were sampled at random 
from each processor for the CEMA analysis, generating a total of $15680$
samples for computing the trigger with the \pmetric-indicator. 
The parameters for computing the metric were chosen as follows: $\alpha=0.94$, $\beta=0.98$ and $\gamma=0.01$.
This corresponds to less than $4\%$ of the total simulation volume. 

\Fig{M_p_insitu}  shows the \pmetric-indicator being computed \emph{in situ} inside the simulation code. 
From top to bottom, the rows show the result when the indicator is computed every $10$,
$100$ and $1000$ time steps.  As the frequency of computing these indicator increases, 
the signal tends to get noisier. However, the overall trends and triggers do not change.   
These images show that our quantile-sampling approach provides a well defined trigger, using $\thresh_{\pmetric} \in
[0.725, 0.885]$ and can be used with confidence to predict the rapid rise in the heat release rate that we require to 
guide temporal and spatial refinement decisions \emph{in situ}. 

As can be inferred from \Tab{cema_costs}, performing the 
CEMA analysis on all grid points would increase the cost of the simulation by a factor of $5$, or $400\%$, which 
is clearly infeasible. Using the sublinear sampling algorithm on the other hand, incurs an overhead of only $1\%$ 
on the total simulation cost when performed every ten time-steps.  The cost
savings are even more dramatic in larger, three-dimensional production runs, as
the number of samples required does not increase with the number of grid
points.  Furthermore, we note the cost savings are further increased for 
larger mechanisms such a primary reference fuel (PRF), composed of a blend of
iso-octane and n-heptane.  For the PRF mechanism, the CEMA 
overhead without sublinear sampling would be a factor of $60$ or more. We plan
to deploy this algorithm \emph{in situ} in future 
large three-dimensional production simulations using S3D, especially with large chemical mechanisms such as PRF.     

\begin{figure}[htpb]
\includegraphics[width=1.0\textwidth]{./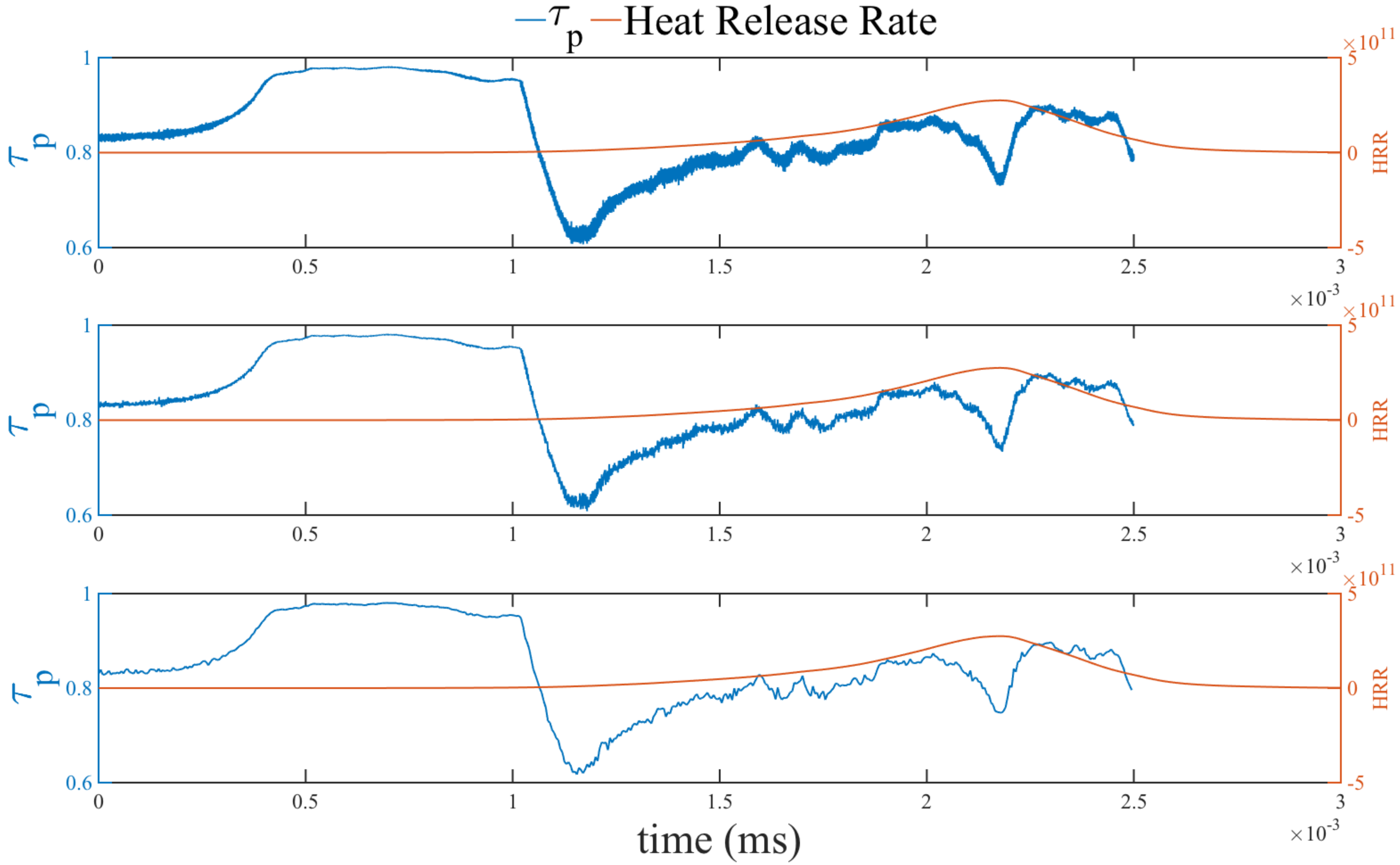}
\caption{\label{fig:M_p_insitu} Plots showing \pmetric-indicator being computed every $10$ (top), 
$100$ (middle) and $1000$ (bottom) time steps. }
\end{figure}

\section{Conclusion}
\label{sec:conc} 
We have proposed an approach for enabling dynamic, adaptive, extreme-scale 
scientific computing workflows.  We introduce the notion of indicators and 
triggers that are computed \emph{in situ}, that support data-driven control-flow 
decisions based on the simulation state. For those indicators and triggers that
are computationally prohibitive to compute, we demonstrate how sublinear 
algorithms enable their estimation with high confidence while incurring 
extremely low computational overheads.

Throughout this paper, we demonstrate our approach in practice using a proposed 
indicator to detect changes in the underlying physics of a combustion simulation.  
The goal of the indicator is to predict rapid heat release in direct numerical simulations of turbulent 
combustion. We show that chemical explosive mode analysis (CEMA) 
can be used to devise a noise-tolerant indicator for rapid increase in heat release.
Specifically,  we study the distribution of  CEMA values, and show that
heat release is preceded by a decrease in the range of  top quantiles in this
distribution.  We devise an indicator to quantify this intuition and show
that it can serve as a robust precursor for heat release. 
The cost of  exhaustive computation of CEMA values dominates  
the total simulation time, and we propose a sublinear algorithm based on
quantile sampling to overcome this computational bottleneck.  Our algorithm  
comes with provable error/confidence
bounds, as a function of the number of samples. Most importantly,  the number
of samples is independent of the problem size, thus our proposed sampling
algorithm offers perfect scalability.  Our experiments show that our
sublinear algorithm is  nearly as  accurate as  an exact algorithm  that relies
on exhaustive computation, while  taking only a fraction of the time.
Essentially,  sampling in this case  provides the algorithmic foundation to
turn a critical yet intractable indicator into a practical indicator that takes negligible time.   
Our experiments on homogeneous charge compression ignition (HCCI)  and reactivity controlled 
compression ignition (RCCI) simulations show that the proposed method can  predict heat release, 
and its computational overhead is negligible. 

The  impact of this paper is two fold. From the applications' perspective,
we have introduced an important  tool that enables adaptive workflows in 
combustion simulations, an important area of computational science and engineering.  
Our proposed methods, enable the controlling of mesh granularities, and adaptive I/O 
frequencies. It is already becoming critically important to have adaptive
control over these quantities, but will be crucial as we look
ahead to exascale computing. From an algorithmic perspective, our work showcased how sublinear 
algorithms, a recent development in  theoretical computer science can be
applied \emph{in situ}.  We believe these  algorithmic techniques hold great potential for 
\emph{in situ} analysis, and we expect them to be more widely used in the near future.


\begin{thebibliography}{10}

\bibitem{JITStaging}
{\sc H.~Abbasi, G.~Eisenhauer, M.~Wolf, K.~Schwan, and S.~Klasky}, {\em {Just
  In Time: Adding Value to The IO Pipelines of High Performance Applications
  with JITStaging}}, in Proc. of 20th International Symposium on High
  Performance Distributed Computing (HPDC'11), June 2011.

\bibitem{dav_exascale}
{\sc S.~Ahern, A.~Shoshani, K.-L. Ma, A.~Choudhary, T.~Critchlow, S.~Klasky,
  V.~Pascucci, J.~Ahrens, E.~W. Bethel, H.~Childs, J.~Huang, K.~Joy, Q.~Koziol,
  G.~Lofstead, J.~S. Meredith, K.~Moreland, G.~Ostrouchov, M.~Papka,
  V.~Vishwanath, M.~Wolf, N.~Wright, and K.~Wu}, {\em Scientific Discovery at
  the Exascale, a Report from the DOE ASCR 2011 Workshop on Exascale Data
  Management, Analysis, and Visualization}, 2011.

\bibitem{BaKo15}
{\sc G.~Ballard, T.~G. Kolda, A.~Pinar, and C.~Seshadhri},
  \href{http://arxiv.org/abs/1506.03872}{{\em Diamond sampling for approximate
  maximum all-pairs dot-product (mad) search}}, Tech. Report Arxiv:1506.03872,
  2015.

\bibitem{Bennett:2012}
{\sc J.~C. Bennett, H.~Abbasi, P.-T. Bremer, R.~Grout, A.~Gyulassy, T.~Jin,
  S.~Klasky, H.~Kolla, M.~Parashar, V.~Pascucci, P.~Pebay, D.~Thompson, H.~Yu,
  F.~Zhang, and J.~Chen},
  \href{http://conferences.computer.org/sc/2012/papers/1000a089.pdf}{{\em
  Combining in-situ and in-transit processing to enable extreme-scale
  scientific analysis}}, in {SC} '12: Proceedings of the International
  Conference on High Performance Computing, Networking, Storage and Analysis,
  Salt Lake Convention Center, Salt Lake City, {UT}, {USA}, November 10--16,
  2012, J.~Hollingsworth, ed., pub-IEEE:adr, 2012, IEEE Computer Society Press,
  pp.~49:1--49:9.

\bibitem{bhagatwala1}
{\sc A.~Bhagatwala, J.~H. Chen, and T.~Lu}, {\em Direct numerical simulations
  of {SACI}/{HCCI} with ethanol}, Comb. Flame, 161 (2014), pp.~1826--1841.

\bibitem{bhagatwala2}
{\sc A.~Bhagatwala, R.~Sankaran, S.~Kokjohn, and J.~H. Chen}, {\em Numerical
  investigation of spontaneous flame propagation under {RCCI} conditions},
  Comb. Flame,  (Under review).

\bibitem{visit:2011}
{\sc J.-M.~F. Brad~Whitlock and J.~S. Meredith}, {\em {Parallel In Situ
  Coupling of Simulation with a Fully Featured Visualization System}}, in Proc.
  of 11th Eurographics Symposium on Parallel Graphics and Visualization
  (EGPGV'11), April 2011.

\bibitem{chen09}
{\sc J.~H. Chen, A.~Choudhary, B.~de~Supinski, M.~DeVries, E.~R. Hawkes,
  S.~Klasky, W.~K. Liao, K.~L. Ma, J.~Mellor-Crummey, N.~Podhorski,
  R.~Sankaran, S.~Shende, and C.~S. Yoo}, {\em Terascale direct numerical
  simulations of turbulent combustion using {S}3{D}}, Computational Science and
  Discovery, 2 (2009), pp.~1--31.

\bibitem{DuPa09}
{\sc D.~Dubhashi and A.~Panconesi}, {\em Concentration of Measure for the
  Analysis of Randomized Algorithms}, Cambridge University Press, 2009.

\bibitem{paraview:ldav11}
{\sc N.~Fabian, K.~Moreland, D.~Thompson, A.~Bauer, P.~Marion, B.~Gevecik,
  M.~Rasquin, and K.~Jansen},
  \href{http://dx.doi.org/10.1109/LDAV.2011.6092322}{{\em The paraview
  coprocessing library: A scalable, general purpose in situ visualization
  library}}, in Proc. of IEEE Symposium on Large Data Analysis and
  Visualization (LDAV), October 2011, pp.~89 --96.

\bibitem{FischerSurvey}
{\sc E.~Fischer}, {\em The art of uninformed decisions: A primer to property
  testing}, Bulletin of EATCS, 75 (2001), pp.~97--126.

\bibitem{Ho63}
{\sc W.~Hoeffding}, \href{http://www.jstor.org/stable/2282952}{{\em Probability
  inequalities for sums of bounded random variables}}, Journal of the American
  Statistical Association, 58 (1963), pp.~13--30.

\bibitem{JhSePi15}
{\sc M.~Jha, C.~Seshadhri, and A.~Pinar},
  \href{http://dx.doi.org/10.1145/2736277.2741101}{{\em Path sampling: A fast
  and provable method for estimating 4-vertex subgraph counts}}, in Proceedings
  of the 24th International Conference on World Wide Web, WWW '15, Republic and
  Canton of Geneva, Switzerland, 2015, International World Wide Web Conferences
  Steering Committee, pp.~495--505.

\bibitem{JhSePi13}
\leavevmode\vrule height 2pt depth -1.6pt width 23pt,
  \href{http://dx.doi.org/10.1145/2700395}{{\em A space-efficient streaming
  algorithm for estimating transitivity and triangle counts using the birthday
  paradox}}, ACM Trans. Knowl. Discov. Data, 9 (2015), pp.~15:1--15:21.

\bibitem{kokjohn}
{\sc S.~L. Kokjohn, R.~M. Hanson, D.~A. Splitter, and R.~D. Reitz}, {\em Fuel
  reactivity controlled compression ignition (rcci): A pathway to controlled
  high-efficiency clean combustion}, Int. J. Engine. Res., 12 (2011), pp.~209--226.

\bibitem{KoPiPlSe13}
{\sc T.~G. Kolda, A.~Pinar, T.~Plantenga, C.~Seshadhri, and C.~Task},
  \href{http://dx.doi.org/10.1137/13090729X}{{\em Counting triangles in massive
  graphs with mapreduce}}, SIAM Journal on Scientific Computing, 36 (2014),
  pp.~S48--S77.

\bibitem{lu}
{\sc T.~Lu, C.~S. Yoo, J.~H. Chen, and C.~K. Law}, {\em Three-dimensional
  direct numerical simulation of a turbulent lifted hydrogen flame in heated
  coflow: a chemical explosive mode analysis}, J. Fluid Mech., 652 (2010),
  pp.~45--64.

\bibitem{modelPaper}
{\sc K.~Myers, E.~Lawrence, M.~Fugate, J.~Woodring, J.~Wendelberger, and
  J.~Ahrens}, {\em An in situ approach for approximating complex computer
  simulations and identifying important time steps}, tech. report, 2014.
\newblock arXiv:1409.0909v1.

\bibitem{nouanesengsy2014adr}
{\sc B.~Nouanesengsy, J.~Woodring, J.~Patchett, K.~Myers, and J.~Ahrens}, {\em
  {ADR} visualization: A generalized framework for ranking large-scale
  scientific data using analysis-driven refinement}, in Large Data Analysis and
  Visualization (LDAV), 2014 IEEE 4th Symposium on, IEEE, 2014, pp.~43--50.

\bibitem{RonSurvey}
{\sc D.~Ron}, {\em Algorithmic and analysis techniques in property testing},
  Foundations and Trends in Theoretical Computer Science, 5 (2009),
  pp.~73--205.

\bibitem{RubSurvey}
{\sc R.~Rubinfeld}, {\em Sublinear time algorithms}, International Conference
  of Mathematicians,  (2006).

\bibitem{RS96}
{\sc R.~Rubinfeld and M.~Sudan}, {\em Robust characterization of polynomials
  with applications to program testing}, SIAM Journal of Computing, 25 (1996),
  pp.~647--668.

\bibitem{SePiKo13}
{\sc C.~Seshadhri, A.~Pinar, and T.~G. Kolda},
  \href{http://dx.doi.org/10.1137/1.9781611972832.2}{{\em Triadic measures on
  graphs: The power of wedge sampling}}, in Proceedings of the 2013 SIAM
  International Conference on Data Mining, 2013, pp.~10--18.

\bibitem{SePiKo14}
{\sc C.~Seshadhri, A.~Pinar, and T.~G. Kolda},
  \href{http://dx.doi.org/10.1002/sam.11224}{{\em Wedge sampling for computing
  clustering coefficients and triangle counts on large graphs?}}, Statistical
  Analysis and Data Mining, 7 (2014), pp.~294--307.

\bibitem{shan}
{\sc R.~Shan, C.~S. Yoo, J.~H. Chen, and T.~Lu}, {\em Computational diagnostics
  for n-heptane flames with chemical explosive mode analysis}, Comb. Flame, 159
  (2012), pp.~3119--3127.

\bibitem{doe_arch}
{\sc R.~Stevens, A.~White, S.~Dosanjh, A.~Geist, B.~Gorda, K.~Yelick,
  J.~Morrison, H.~Simon, J.~Shalf, J.~Nichols, and M.~Seager}, {\em
  Architectures and technology for extreme scale computing}, tech. report,
  2009.

\bibitem{TBSP13}
{\sc D.~Thompson, J.~Bennett, C.~Seshadhri, and A.~Pinar},
  \href{http://dx.doi.org/10.1109/LDAV.2013.6675161}{{\em A provably-robust
  sampling method for generating colormaps of large data}}, in Large-Scale Data
  Analysis and Visualization (LDAV), 2013 IEEE Symposium on, Oct 2013,
  pp.~77--84.

\bibitem{glean:ldav11}
{\sc V.~Vishwanath, M.~Hereld, and M.~Papka},
  \href{http://dx.doi.org/10.1109/LDAV.2011.6092178}{{\em Toward
  simulation-time data analysis and i/o acceleration on leadership-class
  systems}}, in Proc. of IEEE Symposium on Large Data Analysis and
  Visualization (LDAV), October 2011.

\bibitem{Yu2010}
{\sc H.~Yu, C.~Wang, R.~W. Grout, J.~H. Chen, and K.-L. Ma}, {\em In-situ
  visualization for large-scale combustion simulations}, IEEE Computer Graphics
  and Applications, 30 (2010), pp.~45--57.

\end{thebibliography}
\end{document}